\documentclass[12pt]{amsart}
\usepackage[T1]{fontenc}
\usepackage{kpfonts}
\usepackage[dvipsnames]{xcolor}
\usepackage{pdfpages}
\usepackage{sgame}
\usepackage{accents}
\usepackage{tikz-cd}
\usepackage{float}
\usepackage[round]{natbib}
\usepackage{multirow}
\usepackage{dutchcal}
\usepackage{pdfpages}
\usepackage{bbm}
\usepackage{sgame}
\usepackage{mathtools}
\usepackage{amsthm}
\usepackage{enumitem}
\usepackage[margin=1in]{geometry}
\usepackage{caption}
\usepackage{subcaption}
\usepackage{epigraph}
\usepackage{listings}
\usetikzlibrary{calc}
\usetikzlibrary{shapes,arrows}
\usepackage{tcolorbox}

\newcommand{\ubar}[1]{\underaccent{\bar}{#1}}

\usepackage{hyperref,cleveref}
\newtheorem{theorem}{Theorem}[section]
\newtheorem{proposition}[theorem]{Proposition}
\newtheorem{lemma}[theorem]{Lemma}
\newtheorem{corollary}[theorem]{Corollary}
\newtheorem{claim}[theorem]{Claim}
\newtheorem{definition}[theorem]{Definition}

\newtheorem{assumption}[theorem]{Assumption}

\hypersetup{
    colorlinks=true,
    linkcolor=OrangeRed,
    filecolor=MidnightBlue,      
    urlcolor=MidnightBlue,
    citecolor=MidnightBlue,
}

\definecolor{backcolour}{rgb}{0.63, 0.79, 0.95}

\lstdefinestyle{mystyle}{
  backgroundcolor=\color{backcolour},
  basicstyle=\ttfamily\footnotesize,
  breakatwhitespace=false,         
  breaklines=true,                 
  captionpos=b,                    
  keepspaces=true,                 
  numbers=left,                    
  numbersep=5pt,                  
  showspaces=false,                
  showstringspaces=false,
  showtabs=false,                  
  tabsize=2
}

\providecommand{\keywords}[1]{\textbf{\textit{Keywords:}} #1}
\providecommand{\jel}[1]{\textbf{\textit{JEL Classifications:}} #1}


\begin{document}
\author{Marilyn Pease \and Mark Whitmeyer}
\thanks{MP: Kelley School of Business, Indiana University, \href{marpease@iu.edu}{marpease@iu.edu} \& MW: Arizona State University. \href{mailto:mark.whitmeyer@gmail.com}{mark.whitmeyer@gmail.com}. Pease thanks the John Rau Kelley School of Business Faculty Fellowship. We thank David Easley, Ravi Jagadeesan, Jo\~{a}o Ramos, Alex Teytelboym, and a conference audience at Durham University for their feedback.}

\title{On Risk Aversion in Auctions}
\begin{abstract} We provide a unifying way to analyze how risk aversion changes bidding in auctions by asking which bids become more attractive as bidders become more risk averse. In first-price auctions, under two payoff conditions--winning is never worse than the outside option, and winning with a low bid is preferable to winning only with a high bid--greater risk aversion makes high bids more appealing. In second-price auctions with a known outside option, bidding more increases risk exposure conditional on winning, so greater risk aversion favors lower bids. We show these bid-level forces translate into corresponding equilibrium comparative statics.
\end{abstract}
\maketitle
\keywords{Auctions; Risk Aversion}\\
\jel{D0; D8}

\section{Introduction}

Auctions are among the most widely used market institutions, and bidders' risk attitudes are central both for predicting behavior within auctions and for auction-design itself. The classical theory makes this point starkly: once bidders are risk-averse, familiar equivalence results break down and details like the pricing rule can have first-order effects. At the same time, the standard intuition for why risk aversion matters often relies on very specific modeling assumptions. As a result, several basic questions remain surprisingly unsettled at a general level: \emph{when} does greater risk aversion push bids \emph{up}, \emph{when} does it push bids \emph{down}, and \emph{why}?

A foundational result in the classic auction literature is that in symmetric private-values environments, risk aversion induces more aggressive bidding in first-price auctions, in the sense that equilibrium bids lie above the risk-neutral benchmark.\footnote{\cite{riley1981risk,maskin1984first,holt1980competitivebidding,matthews1987comparing} are classic treatments.} The usual intuition is an ``insurance'' motive: in a first-price auction, a higher bid trades off a lower surplus conditional on winning against a higher probability of winning, and a more risk-averse bidder is willing to sacrifice expected surplus to reduce payoff risk by increasing the chance of winning.\footnote{For a focused overview of risk aversion and auction design, see \cite{vasserman2021risk}. Experimental surveys include \cite{kagel1995auctionsurvey,kagellevin2016auctionsurvey}.}

This classic result, however, is robust to some extensions and not others.\footnote{Reserves under risk aversion (\cite{humatthewszou2010reserve,humatthewszou2019lowreserve,hu2011biddernumber}) and financial constraints (\cite{chegale1998standard}) yield the classic result. In other extensions such as heterogeneity and revenue comparisons with arbitrary types (\cite{chegale2006revenuecomparisons}), an uncertain number of bidders (\cite{mcafee_mcmillan1987stochastic,levinozdenoren2004uncertain}), selective entry with risk aversion (\cite{liluzhao2015selectiveentry}), hidden reserves (\cite{litan2017hiddenreserve}), or sequential versus simultaneous auctions with risk aversion (\cite{chakraborty2019simultaneous}), the result does not necessarily go through. In still other environments, like security-bid auctions (\cite{bajoori2024risk}), higher risk aversion leads to lower bids.} This paper proposes a unifying answer to the question of when and why risk aversion pushes bids up or down. To do this, we rely on a simple behavioral idea: treat ``bidding more'' or ``bidding less'' as choosing between actions that generate different distributions of payoffs, and compare these actions (bids) by how they are affected by bidders' risk aversion. In order to operationalize this idea, we rely on a concept from \cite{safety}: bid \(b\) is ``safer" than another, \(b'\), if the set of beliefs at which \(b\) is preferred to \(b'\) expands in a bidder's risk aversion. That is, whenever a bidder prefers \(b\) to \(b'\), she continues to prefer it if she is more risk averse.

The first payoff of this approach is conceptual. It turns the familiar ``insurance'' story into a sharp, easily checkable statement about the environment. In a first-price auction, bidding higher typically raises the probability of winning but reduces the payoff conditional on winning. The classic conclusion--more risk aversion increases bids--suggests that a higher bid is safer than a lower bid. We formalize and clarify this insurance story by giving economically transparent conditions under which this safety ranking holds, showing that these conditions are what drive the classic result in the first-price setting. Specifically, we require that \emph{winning must not be worse than losing} (relative to the bidder's outside option), and \emph{low bids are better winners} (winning with a low bid provides more surplus to the bidder). When these conditions are met, the insurance intuition becomes literally correct: greater risk aversion tilts choices toward higher bids because higher bids are safer.\footnote{More broadly, a large literature studies how benchmark conclusions respond when bidders differ in risk attitudes, face borrowing or liquidity constraints, decide whether to enter, or when the seller uses additional design instruments such as reserve prices and entry fees. These extensions matter both because they are empirically relevant and because they can complicate simple ``format rankings.''}

The second payoff is portability across formats and informational structures. The safety lens does not rely on the special ordinary-differential equation structure of independent-private-values models, and it accommodates settings in which bidders face additional uncertainty about the payoff from winning and may have type-dependent outside options. In the first-price auction, the paper establishes a clean monotone-comparative static in the symmetric increasing equilibrium class: when the prior distribution is exchangeable and under mild regularity conditions, a more risk-averse population bids weakly higher pointwise. The argument proceeds by isolating a single ``marginal tradeoff'' object--how a bidder values a marginal increase in win probability relative to the outside option--and showing that this object moves monotonically with risk aversion because the relevant local deviation is a safety improvement.

A central theme in more recent work is that bidders often face \textit{ex post} (ensuing) risk--uncertainty about realized value or downstream payoff consequences that is resolved after winning. In such environments, the insurance intuition can reverse: bidding more may increase exposure to a risky win payoff, making lower bids more attractive for more risk-averse bidders. This is the intuition behind ``precautionary bidding,'' and it motivates why one can observe risk-averse bidders shading downward in second-price and related auctions when the relevant uncertainty is about the payoff conditional on winning rather than about the event of winning itself.\footnote{The precautionary bidding channel is developed in \cite{eso2004precautionary}. Work on information and uncertainty resolution in timber and related environments includes \cite{athey2001information}. Related work on \textit{ex post} uncertainty includes \cite{luo_perrigne_vuong2018expost} and related papers. Experimental evidence on precautionary bidding is provided in \cite{kocher_pahlke_trautmann2015precautionary}.}

Our paper connects to this channel through the same logic used for first-price auctions. In particular, in a second-price (and more generally \(k\)-/uniform-price) environment with uncertainty in the realized value conditional on winning, bidding more increases the chance of being exposed to a risky payoff. In such cases, the safer action is the lower bid, and greater risk aversion leads bidders to shade bids downward. Our paper formalizes this ``precautionary bidding'' force and proves a corresponding monotone-comparative static for equilibrium bidding in a broad class of interdependent/common-value environments.

Our contribution here is not to re-solve each environment, but to supply a diagnostic criterion: once the environment is specified, the key question becomes whether raising one's bid reduces or increases the relevant payoff risk (relative to the outside option and conditional on the winning event). This lens clarifies why the first-price ``risk aversion raises bids'' force is robust in some extensions and fragile in others, and it separates cases where risk aversion primarily raises bids (insurance against losing) from cases where it primarily depresses bids (avoidance of risky wins).

In settings with interdependent or affiliated values, revenue comparisons are sensitive to the informational structure (e.g., linkage), and adding risk aversion introduces further complications: the classic risk-neutral ranking favoring second-price formats can collide with the classic private-values risk-aversion ranking favoring first-price formats. We do not aim to provide a universal revenue-ranking theorem under interdependence. Instead, we isolate a robust behavioral mechanism: when the risk is borne upon winning (relative to a known outside option), bidding more mainly increases exposure to that risk, depressing bids, even in rich value environments once one restricts attention to a natural equilibrium class (symmetric and increasing).\footnote{Foundational interdependence and affiliation results appear in \cite{milgromweber1982theory}. Early risk-aversion analyses in interdependent settings include \cite{matthews1987comparing,eso2005optimalcorrelated}.}

Risk aversion is also central in multi-unit settings (e.g., treasury or procurement auctions) where bidders may be exposed to both allocation and payoff risk conditional on winning. While much of the theoretical literature focuses on equilibrium characterization and efficiency/revenue properties under specific primitives, our uniform-price result emphasizes a portable force: uniform-price auctions, the pivotal tradeoff parallels the second-price tradeoff--a marginal deviation swaps a sure outside option for a risky win payoff at a price pinned down by an opponent's bid. Under the same ``known outside option / risky win payoff'' structure, we obtain a downward monotone comparative static in risk aversion in a symmetric increasing equilibrium class.\footnote{For divisible-good and treasury auction theory, see \cite{back_zender1993divisible}. For multi-unit auction theory, see \cite{ausubel_cramton2004auctioning}. For structural evidence on risk aversion in procurement/scaling settings, see \cite{lu2008bidding,bolotnyy2023scaling}.}

A parallel empirical literature documents systematic departures from risk-neutral bidding predictions and develops strategies to recover risk aversion from auction data. Laboratory evidence has long found robust overbidding relative to risk-neutral benchmarks in first-price settings, and field applications commonly estimate substantial curvature in bidders' utilities.\footnote{Classic experimental evidence includes \cite{cox1983risk,cox1988theory}; surveys include \cite{kagel1995auctionsurvey,kagellevin2016auctionsurvey}. On structural estimation and validation, see \cite{bajari_hortacsu2005structural}.} At the same time, risk aversion creates a basic identification challenge: bid distributions alone typically do not separately identify values and preferences without additional variation (across formats, entry decisions, or other strategic margins). Our results speak to these empirical themes by offering sign predictions tied to primitives--in particular, whether higher bids primarily insure against losing or primarily increase exposure to risky wins--which can help motivate which sources of variation are informative in a given application.\footnote{For risk-neutral nonparametric identification in first-price auctions, see \cite{guerrepperrignevuong2000optimal}. For nonparametric identification of risk aversion under exclusion restrictions, see \cite{guerrepperrignevuong2009identification}. For semiparametric estimation with risk aversion, see \cite{campo_guerre_perrigne_vuong2011semiparametric}. For identification in standard auction models and inference in English auctions, see \cite{athey_haile2002identification,haile_tamer2003inference}.}

Overall, the results offer a unified view of risk aversion in auctions: risk aversion pushes behavior toward safer bids, but whether ``safer'' means more or less depends on which side of the tradeoff carries the risk--foregone surplus from paying more, or exposure to a risky win payoff. Our framework clarifies the scope of the canonical first-price prediction, explains why second-/uniform-price environments can exhibit the opposite phenomenon, and provides a broad set of conditions that can guide both theory and applied work when choosing models and interpreting bid data.

\section{Auxiliary Results}

Here, we describe the basic building blocks for our auction analysis, taken from \citet{safety}. Consider a single-agent decision problem faced by a subjective-utility maximizer, in which uncertainty is parameterized by a random state \(\theta\). The state \(\theta\) is an element of some compact and metrizable state-space \(\Theta\), endowed with the Borel \(\sigma\)-algebra. We denote the set of all Borel probability measures on \(\Theta\) by \(\Delta \left(\Theta\right)\), and assume that the agent holds a subjective belief \(\mu \in \Delta \left(\Theta\right)\). 

The agent possesses a nonempty set of actions \(A \subseteq \mathbb{R}^{\Theta}\), where each action is a continuous function from the set of states, \(\Theta\), to the set of outcomes (monetary payoffs) \(\mathbb{R}\). Let \(a_\theta \in \mathbb{R}\) denote the payoff to action \(a\) in state \(\theta\). The agent has a utility function defined on the outcome space \(u \colon \mathbb{R} \to \mathbb{R}\), which we assume is continuous and strictly increasing. We say that she becomes more risk averse if her utility function is \(\hat{u}\) where \(\hat{u} = \phi \circ u\) for some strictly increasing and weakly concave \(\phi\).

For any two distinct actions \(a, b \in A\), we define the set \(P_{a, b}\left(a\right)\) to be the subset of the probability simplex on which the agent weakly prefers action \(a\) to \(b\); formally,
\[P_{a,b}\left(a\right) \coloneqq \left\{\mu \in \Delta \left(\Theta\right) \colon \mathbb{E}_{\mu}u\left(a_\theta\right) \geq \mathbb{E}_{\mu}u\left(b_\theta\right)\right\}\text{.}\]
When the utility function is \(\hat{u}\), we define the set \(\hat{P}_{a,b}\left(a\right)\) in the analogous manner.

Our key concept is the ``safety'' binary relation.
\begin{definition}
    Action \(a\) is \emph{safer} than action \(b\), \(a \succeq_S b\), if for any strictly increasing, concave \(\phi\), \(P_{a,b}\left(a\right) \subseteq \hat{P}_{a,b}\left(a\right)\).
\end{definition}
The main result of \citet{safety} characterizes the safety relation in terms of actions' state-dependent monetary payoffs. Toward stating that result, for any two distinct actions \(a\) and \(b\) that do not dominate each other, we let \(\mathcal{A}\) denote the set of states in which \(a\) strictly preferred to \(b\), \(\mathcal{B}\) denote the set of states in which \(b\) is strictly preferred to \(a\), and \(\mathcal{C}\) denote the set of states in which the DM is indifferent between \(a\) and \(b\): \[\label{char1}\tag{\(1\)}\mathcal{A} \coloneqq \left\{\theta \in \Theta \colon a_{\theta} \ > \ b_{\theta}\right\}, \quad \mathcal{B} \coloneqq \left\{\theta \in \Theta \colon a_{\theta} \ < \ b_{\theta}\right\}, \quad \mathcal{C} \coloneqq \left\{\theta \in \Theta \colon a_{\theta} \ = \ b_{\theta}\right\}\text{.}\]
Then,
\begin{theorem}[Theorem~3.1 in \citet{safety}]\label{thm:safety}Let \(a\) and \(b\) be two distinct actions that do not dominate each other. Then, action \(a\) is safer than action \(b\) if and only if for each \(\theta \in \mathcal{A}\) and \(\theta' \in \mathcal{B}\), \(b_{\theta'} \geq a_{\theta}\) and \(a_{\theta'} \geq b_{\theta}\).
\end{theorem}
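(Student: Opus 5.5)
We prove the two directions separately. Throughout, write \(x_\theta = u(a_\theta)\) and \(y_\theta = u(b_\theta)\). Since \(u\) is strictly increasing, the sets \(\mathcal{A},\mathcal{B},\mathcal{C}\) are unchanged if monetary payoffs are replaced by utilities. Because \(a\) and \(b\) do not dominate each other, both \(\mathcal{A}\) and \(\mathcal{B}\) are nonempty. For any belief \(\mu\), \(\mu\in P_{a,b}(a)\) if and only if
\[
\int_{\mathcal{A}} (x_\theta-y_\theta)\,d\mu \;\ge\; \int_{\mathcal{B}} (y_{\theta}-x_{\theta})\,d\mu,
\]
since states in \(\mathcal{C}\) contribute nothing, and the analogous statement holds for \(\hat P\) with \(\phi(x_\theta)-\phi(y_\theta)\) in place of \(x_\theta-y_\theta\).

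\emph{Sufficiency.} Assume the payoff condition. Pick any \(\theta\in\mathcal{A}\) and \(\theta'\in\mathcal{B}\). The condition gives \(a_\theta>b_\theta\), \(b_{\theta'}>a_{\theta'}\), \(b_{\theta'}\ge a_\theta\) and \(a_{\theta'}\ge b_\theta\). Hence both endpoints of the interval \([x_{\theta'},y_{\theta'}]\) lie weakly above the corresponding endpoints of \([y_\theta,x_\theta]\). Concavity of \(\phi\) means its chord slopes are decreasing in each endpoint, so
\[
\frac{\phi(x_\theta)-\phi(y_\theta)}{x_\theta-y_\theta}\;\ge\;\frac{\phi(y_{\theta'})-\phi(x_{\theta'})}{y_{\theta'}-x_{\theta'}}.
\]
Let \(\kappa_\theta\) denote the left-hand slope for \(\theta\in\mathcal{A}\) and \(\lambda_{\theta'}\) the right-hand slope for \(\theta'\in\mathcal{B}\). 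Then \(\inf_{\mathcal{A}}\kappa\ge c\ge\sup_{\mathcal{B}}\lambda\) for some constant \(c\), which is positive because \(\phi\) is strictly increasing. Now fix \(\mu\in P_{a,b}(a)\). We get
\[
\int_{\mathcal{A}}\bigl(\phi(x)-\phi(y)\bigr)d\mu\ \ge\ c\int_{\mathcal{A}}(x-y)\,d\mu\ \ge\ c\int_{\mathcal{B}}(y-x)\,d\mu\ \ge\ \int_{\mathcal{B}}\bigl(\phi(y)-\phi(x)\bigr)d\mu,
\]
so \(\mu\in\hat P_{a,b}(a)\). The only care needed is integrability, and this is immediate: payoffs are continuous on a compact \(\Theta\), hence bounded, and \(\phi\) is continuous on the interior of its domain.

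\emph{Necessity.} Suppose the condition fails. Then there exist \(\theta\in\mathcal{A}\) and \(\theta'\in\mathcal{B}\) with either \(b_{\theta'}<a_\theta\) or \(a_{\theta'}<b_\theta\). Restrict to two-point beliefs \(\mu=p\delta_\theta+(1-p)\delta_{\theta'}\). Choose \(p\) so that the risk-neutral-in-\(u\) agent is exactly indifferent:
\[
p(x_\theta-y_\theta)=(1-p)(y_{\theta'}-x_{\theta'}).
\]
Such a \(p\in(0,1)\) exists because both sides have positive coefficients. This \(\mu\) lies in \(P_{a,b}(a)\). It remains to construct a concave, strictly increasing \(\phi\) whose chord slope over \([y_\theta,x_\theta]\) is strictly smaller than its chord slope over \([x_{\theta'},y_{\theta'}]\); then \(\mu\notin\hat P_{a,b}(a)\).

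\emph{Case \(y_{\theta'}<x_\theta\).} The \(\mathcal{B}\)-interval has upper endpoint below the top of the \(\mathcal{A}\)-interval. Take \(\phi(t)=\min\{t,\,y_{\theta'}\}+\varepsilon t\) for small \(\varepsilon>0\). This is concave and strictly increasing, has slope \(1+\varepsilon\) on the whole \(\mathcal{B}\)-interval, and its chord slope on \([y_\theta,x_\theta]\) is strictly below \(1+\varepsilon\) because \(\phi\) is flatter (slope \(\varepsilon\)) on the nondegenerate piece \([y_{\theta'},x_\theta]\).

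\emph{Case \(x_{\theta'}<y_\theta\).} Symmetrically, put the kink at \(x_{\theta'}\) using \(\phi(t)=\min\{t,\,y_\theta\}+\varepsilon t\), or a similar piecewise-linear function. Now the \(\mathcal{A}\)-interval lies entirely on the flat part while the \(\mathcal{B}\)-interval includes steep mass below \(y_\theta\).

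\emph{Main obstacle.} The step I expect to be most delicate is this kink construction, specifically checking that in each case one interval really contains strictly more "flat" length than the other. The subtle configurations are those where the intervals are nested, for example when the \(\mathcal{B}\)-interval lies strictly inside the \(\mathcal{A}\)-interval. In that nested configuration I would first try a \(\phi\) that is linear on the inner interval and bends both below and above it. This gives the inner interval the larger chord slope, which is exactly what is needed when the inner interval is the \(\mathcal{B}\)-interval, so in that configuration failure of the condition yields strict reversal. I would need to verify carefully that every configuration violating the condition admits one of these two-kink or one-kink choices.
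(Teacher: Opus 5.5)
The paper does not prove this theorem; it imports it from \citet{safety}, so there is no internal argument to compare against. The closest use in the paper is the first Claim in Appendix~\ref{a:fpa_eq}, which applies safety in exactly your two-state form: a comparison of ratios of utility increments at the indifference belief.

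Your proof is correct. The sufficiency step is fine: it uses monotone chord slopes of a concave \(\phi\) and a separating constant \(c=\sup_{\mathcal B}\lambda\in(0,\infty)\). The necessity step correctly reduces, through the two-point indifference belief, to one task. You need a strictly increasing concave \(\phi\) whose chord slope on \(I_A=[y_\theta,x_\theta]\) is strictly below its chord slope on \(I_B=[x_{\theta'},y_{\theta'}]\). Kinked \(\phi\) are admissible here, since the definition of \(\succeq_S\) does not require differentiability.

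The ``main obstacle'' you flag is not actually open. The negation of the condition is exactly \(y_{\theta'}<x_\theta\) or \(x_{\theta'}<y_\theta\). Each of your one-kink constructions works using only that inequality plus \(y_\theta<x_\theta\) and \(x_{\theta'}<y_{\theta'}\), however the other endpoints are arranged. Nested configurations are therefore covered automatically; for example, \(I_B\) strictly inside \(I_A\) falls under your first case. No two-kink functions are needed, and any \(\varepsilon>0\) works.

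Two small fixes are needed.
\begin{itemize}
\item In the first case, the part of \(I_A\) on the flat piece is \([\max\{y_\theta,y_{\theta'}\},x_\theta]\), not \([y_{\theta'},x_\theta]\). The latter can extend below \(I_A\).
\item In the second case, the kink cannot sit at \(x_{\theta'}\), as you say in words. Then both intervals lie entirely on the slope-\(\varepsilon\) piece and the chord slopes tie. The kink must lie in \((x_{\theta'},y_\theta]\). Your formula's choice \(y_\theta\) is correct: it gives slope \(\varepsilon\) on all of \(I_A\) and a nondegenerate steep piece \([x_{\theta'},\min\{y_\theta,y_{\theta'}\}]\) inside \(I_B\).
\end{itemize}
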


Theorem \ref{thm:safety} tells us that \(a\) is safer than \(b\) if and only if the payoffs to \(a\) lie within the convex hull of the payoffs to \(b\). In other words, \(a\) is safer if and only if choosing \(a\) and being ``wrong" (because the state is actually \(\theta^\prime\)) is (weakly) not worse than choosing \(b\) and being ``wrong," while choosing \(a\) and being ``right" is not better than choosing \(b\) and being ``right."

\section{Bidding and Risk Aversion}\label{sec:bidra}

We begin our formal analysis by investigating the static auction of a single, indivisible good from the perspective of an individual bidder who is participating in the auction. At the point of bidding, there is a simple dichotomy of \textit{certain} objects and \textit{random} objects. These could be the bidder's value for the good upon receipt, her outside option, others' bids in the auction, whether she gets the good, how much she has to pay upon winning or losing the auction, the reserve price, etc. We summarize the random generator of these unknown objects with a single variable, the \textit{state} \(\theta \in \Theta\), which we assume to be compact and metrizable. We assume that the bidder is a subjective utility maximizer, whose belief is a Borel probability distribution over states \(\mu \in \Delta(\Theta)\). A bidder's decision is how much to bid. Importantly, in this section, our analysis is not equilibrium analysis, as \(\mu\) is held fixed as we change a bidder's risk aversion--\(\mu\) is endogenous in equilibrium and should vary as we vary \(u\). \S\ref{sec:fpa} and \S\ref{sec:spa} analyze equilibria.

In the auctions to be considered, if a bidder's bid is strictly higher than the maximum of the highest bid made by the others and the reserve price, denoted \(\gamma_{\theta}\), the allocation is known: the bidder gets the item. Likewise, the allocation is also known if her bid is strictly lower than \(\gamma_{\theta}\): the bidder does not get the item. Of course, at the point of bidding \textit{whether} her bid is maximal is typically random. The bidder's value for obtaining the good, \(v_\theta\), and what she gets upon losing the auction, her outside option \(s_\theta\), are both potentially random.

The following bit of accounting will prove especially useful. For two bids \(a > b\), we define
\[\label{exp1}\tag{\(2\)}\begin{split}
    \Theta_{both} &\coloneqq \left\{\theta \in \Theta \colon \ b > \gamma_{\theta}, \quad \text{or} \quad b = \gamma_{\theta} \text{ and get the good after either bid}\right\},\\
    \Theta_{a} &\coloneqq \left\{\theta \in \Theta \colon \ \begin{array}{cc}  a = \gamma_{\theta} \text{ and get the good only after } a, & a > \gamma_{\theta} > b,\\ \text{or } b = \gamma_{\theta} \text{ and get the good only after } a & \end{array}\right\},\\
    \Theta_{neither} &\coloneqq \left\{\theta \in \Theta \colon \ a < \gamma_{\theta}, \quad \text{or} \quad a = \gamma_{\theta} \text{ and don't get the good after either bid}\right\}\text{.}
\end{split}\]
Recall \eqref{char1}, that given two undominated actions (bids) \(a, b \in A\), \(\mathcal{A}\) is the set of states in which \(a\) strictly preferred to \(b\), \(\mathcal{B}\) is the set of states in which \(b\) is strictly preferred to \(a\), and \(\mathcal{C}\) is the set of states in which the DM is indifferent. Then,
\begin{lemma}
    In the \(k\)th-price (\(k \geq 1\)) auction, \(\mathcal{A} \subseteq  \Theta_{a}\).
\end{lemma}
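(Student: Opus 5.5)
The plan is to show the contrapositive: for $a>b$ and any $\theta\notin\Theta_a$, the bid $a$ is not strictly better than $b$, so $\theta\notin\mathcal{A}$. The first step is to check that, for each fixed $\theta$, the sets in \eqref{exp1} partition $\Theta$. If $\gamma_\theta$ is strictly below $b$, both bids win. If it is strictly above $a$, both lose. If $a>\gamma_\theta>b$, only $a$ wins. At the tie points $\gamma_\theta\in\{a,b\}$, the tie-breaking outcome decides which set $\theta$ falls in. Here I will use the implicit monotonicity that a lower bid never wins when a higher bid loses, so "get the good only after $b$" is not a possible case. Given this, it suffices to handle $\theta\in\Theta_{both}$ and $\theta\in\Theta_{neither}$.

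\textbf{Case $\theta\in\Theta_{neither}$.} The bidder loses under either bid. In a $k$th-price auction a losing bidder pays nothing, so she receives the outside option $s_\theta$ under both bids. Hence $a_\theta=b_\theta$ and $\theta\in\mathcal{C}$.

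\textbf{Case $\theta\in\Theta_{both}$.} The bidder wins under both bids. For $k\ge 2$, the price is the $k$th highest bid, or the reserve, among the relevant bids. When our bidder is a winner, this is determined by the others' bids and the reserve, so it is independent of her own bid. I need to be slightly careful here. For $k\ge2$, while she is among the top bids, the $k$th-order statistic counting her own bid equals the $(k-1)$th-order statistic of the others', provided her bid stays above it. Since she wins with $b$ and $a>b$, this holds for both bids, so the price is the same and $a_\theta=b_\theta$. For $k=1$, she pays her own bid, so $a_\theta=v_\theta-a<v_\theta-b=b_\theta$ and $\theta\in\mathcal{B}$. In either case $\theta\notin\mathcal{A}$.

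Combining the cases gives $\mathcal{A}\subseteq\Theta_a$. The main obstacle is the pricing argument in $\Theta_{both}$ for general $k$. I would state it explicitly: conditional on winning, raising one's own bid can only weakly raise the price, with equality for $k\ge2$. I would also treat ties at $\gamma_\theta=b$ through the tie-breaking convention built into the definition of $\Theta_{both}$.
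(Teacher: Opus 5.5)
Your proof is correct and follows the paper's argument. On $\Theta_{both}\cup\Theta_{neither}$ the allocation is unchanged, losing yields $s_\theta$ under either bid, and conditional on winning the higher bid pays weakly more (strictly for $k=1$, equally for $k\ge 2$), so these states lie in $\mathcal{B}\cup\mathcal{C}$. You are more explicit than the paper about the partition and the $k\ge 2$ price invariance, and the monotonicity you invoke is not an extra assumption: at $\gamma_\theta=b$ the bid $a$ is strictly above $\gamma_\theta$ and wins, and at $\gamma_\theta=a$ the bid $b$ is strictly below it and loses, so the stated allocation rule already rules out a state where only $b$ wins.
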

\begin{proof}
    This is implied by the fact that the different bids do not affect who gets the good for states in \(\Theta_{both} \cup \Theta_{neither}\). The only difference from our buyer's perspective is the amount she pays. Moreover, all of these auctions have the feature that, conditional on whether the buyer receives the good, the buyer pays more if she bids more. Consequently, \(\Theta_{both}\) and \(\Theta_{neither}\) are subsets of either \(\mathcal{B}\) or \(\mathcal{C}\), the states in which the lower bid is strictly better or equal to the higher bid.
\end{proof}

On the other hand, whether \(\Theta_{both}\) or \(\Theta_{neither}\) are subsets of \(\mathcal{B}\) or \(\mathcal{C}\) does depend on the specific auction.

\subsection{First-Price Auctions}

In the first-price auction, if the bidder bids \(a\), then she obtains the real numbers
\[v_\theta - a\text{, if she \textbf{wins} the good;} \qquad \text{and} \qquad s_\theta, \text{ if she \textbf{does not win} the good.}\]

First let us categorize the sets \(\Theta_{both}\) and \(\Theta_{neither}\). As the only difference between bids \(a\) and \(b\) for states in \(\Theta_{both}\) is the amount the bidder pays and \(a > b\), it must be that \(\Theta_{both} \subseteq \mathcal{B}\). That is, if the state lies in \(\Theta_{both}\)--both bids win--it must be that the monetary reward from the lower bid is higher. Second, it must be that \(\Theta_{neither} \subseteq \mathcal{C}\), as the bidder merely gets her outside option for both bids (she loses the auction). If the state is such that both bids are losers, the bidder is indifferent between the two bids, as it is a first-price auction.

\begin{definition}
    We say that \emph{winning cannot hurt} if \(\inf_{\theta \in \Theta} \left(v_\theta - a\right) \geq \sup_{\theta \in \Theta}s_{\theta}\) and \emph{low bids are better winners} if \(\inf_{\theta' \in \mathcal{B}} \left(v_{\theta'} - b\right) \geq \sup_{\theta \in \mathcal{A}} \left(v_{\theta} - a\right)\).
\end{definition}
\begin{proposition}\label[proposition]{fpa}
    The larger bid, \(a\), is safer than the smaller bid, \(b\), \(a \succeq_S b\), if winning cannot hurt and low bids are better winners.
\end{proposition}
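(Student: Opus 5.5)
The plan is to verify the two inequalities in Theorem~\ref{thm:safety}, with the high bid \(a\) in the role of the safer action and the low bid \(b\) in the role of the riskier one. For this we need to know which states make up \(\mathcal{A}\) and \(\mathcal{B}\). The earlier lemma gives \(\mathcal{A} \subseteq \Theta_a\). The discussion preceding the definition gives \(\Theta_{both} \subseteq \mathcal{B}\) and \(\Theta_{neither} \subseteq \mathcal{C}\). So every state in \(\mathcal{B}\) lies either in \(\Theta_{both}\) or in \(\Theta_a\). In the second case the bidder wins only with \(a\), and \(a\) is strictly worse there, which happens when winning at price \(a\) yields less than \(s_\theta\). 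Before doing anything else, I would first use \emph{winning cannot hurt} to rule out such states, so that \(\mathcal{B} \subseteq \Theta_{both}\). At any \(\theta \in \Theta_a\), winning cannot hurt gives \(v_\theta - a \geq s_\theta\). Hence \(a\) is weakly better there, so \(\Theta_a \cap \mathcal{B} = \emptyset\).

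With this classification the payoffs are explicit. At \(\theta \in \mathcal{A} \subseteq \Theta_a\), action \(a\) pays \(a_\theta = v_\theta - a\) and \(b\) pays \(b_\theta = s_\theta\). At \(\theta' \in \mathcal{B} \subseteq \Theta_{both}\), action \(a\) pays \(a_{\theta'} = v_{\theta'} - a\) and \(b\) pays \(b_{\theta'} = v_{\theta'} - b\). (Tie-breaking states at \(\gamma_\theta = a\) or \(\gamma_\theta = b\) are handled by the ``get the good'' clauses in \eqref{exp1}, so these payoff formulas apply verbatim.)

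The two conditions of Theorem~\ref{thm:safety} then become the following.

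First, we need \(b_{\theta'} \geq a_\theta\), that is, \(v_{\theta'} - b \geq v_\theta - a\) for all \(\theta' \in \mathcal{B}\) and \(\theta \in \mathcal{A}\). This is exactly \emph{low bids are better winners}, after passing from the inf/sup form to the pointwise form.

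Second, we need \(a_{\theta'} \geq b_\theta\), that is, \(v_{\theta'} - a \geq s_\theta\). This follows from \emph{winning cannot hurt}: \(v_{\theta'} - a \geq \inf_{\Theta}(v - a) \geq \sup_{\Theta} s \geq s_\theta\).

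Finally, the hypotheses of Theorem~\ref{thm:safety} (distinct, non-dominating actions) should be addressed. If \(\mathcal{A}\) or \(\mathcal{B}\) is empty, one bid dominates the other. In that case safety holds trivially, or the statement is read as applying only to non-dominated pairs, as in the theorem, and I would remark on this.

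The main obstacle is the first step, showing \(\mathcal{B} \cap \Theta_a = \emptyset\) and handling the tie states carefully. Everything after that is a direct substitution into Theorem~\ref{thm:safety}.
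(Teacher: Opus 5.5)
Your proposal is correct and is essentially the paper's proof. Both use \(\mathcal{A}\subseteq\Theta_a\), \(\Theta_{both}\subseteq\mathcal{B}\), and \(\Theta_{neither}\subseteq\mathcal{C}\) to make the payoffs in Theorem~\ref{thm:safety} explicit, then use winning cannot hurt to get \(a_{\theta'}\ge b_\theta\) and \(\mathcal{B}\cap\Theta_a=\emptyset\), and low bids are better winners for \(b_{\theta'}\ge a_\theta\). The only difference is ordering: you remove \(\mathcal{B}\cap\Theta_a\) before substituting, whereas the paper first writes the general conditions (A1)--(A2) and then specializes.
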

A rough sketch of the proof is as follows, with the formal proof deferred to Appendix \ref{fpaproof}. Given that winning cannot hurt, think about $\mathcal{A}$ and $\mathcal{B}$. $\mathcal{A}$ is the set of states where the bidder can only win with the higher bid, and $\mathcal{B}$ is the set of states where both bids win, but $b$ is cheaper. Then, directly from \Cref{thm:safety}, the two conditions equivalent to safety are 
\[a_{\theta^\prime} = v_{\theta^\prime} - a \geq s_\theta = b_\theta, \quad \text{and} \quad b_{\theta^\prime} = v_{\theta^\prime} - b \geq v_\theta - a = a_\theta \quad \text{for all } \theta, \theta^\prime\text{.}\]
Then, winning cannot hurt implies the first, and low bids are better winners implies the second.

The intuition behind this result is straightforward. Under the partition in \eqref{exp1}, the bids differ only on \(\Theta_a\), the pivotal
states where bidding \(a\) changes the allocation relative to bidding \(b\). In states
\(\Theta_{\text{both}}\), both bids win, and the only effect of bidding higher is to pay more, moving
the payoff from \(v_\theta-b\) to \(v_\theta-a\). In states \(\Theta_{\text{neither}}\), both bids lose and the
bidder obtains \(s_\theta\) regardless, so she is indifferent. Thus, the substantive tradeoff occurs
on \(\Theta_a\): raising the bid replaces the losing payoff \(s_\theta\) (under \(b\)) with the winning
payoff \(v_\theta-a\) (under \(a\)).

The condition \emph{winning cannot hurt} ensures that these marginal wins never generate new
downside outcomes, so bidding higher cannot make the bidder worse off in states where it flips the allocation.
The condition \emph{low bids are better winners} ensures that the states where the lower bid
already wins (the set \(B=\Theta_{\text{both}}\)) are at least as favorable, in net-payoff terms, as
the marginal states where only the higher bid wins (the set \(A \subseteq \Theta_a\)). Together, these conditions mean that moving from \(b\) to \(a\) primarily removes probability
mass from the (potentially low) outside-option outcome and “finances” this by trimming payoffs
in the best winning states (those in which \(b\) already wins). This compresses payoffs in the
sense of Theorem~\ref{thm:safety}, so the set of beliefs under which the higher bid is preferred expands with
risk aversion. In this precise sense, a higher first-price bid acts like insurance against the risky
event of losing.

A special setting, which generalizes the independent private values environment, is that in which the bidder's valuation and the payoff from losing the bid--the outside option--are both known: \(v_{\theta} = v \in \mathbb{R}\) and \(s_{\theta} = s \in \mathbb{R}\) for all \(\theta \in \Theta\). We call this the \textit{known-values environment}.

\begin{corollary}
    In the known-values environment, the larger bid is safer than the smaller bid, \(a \succeq_S b\).
\end{corollary}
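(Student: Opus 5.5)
The plan is to reduce the corollary to Proposition~\ref{fpa} by checking its two payoff conditions in the known-values environment. Degenerate cases, where one bid weakly dominates the other, fall outside the scope of Theorem~\ref{thm:safety} and will be handled directly from the definition of safety.

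\emph{Generic case.} Fix bids \(a > b\) with \(v_\theta = v\) and \(s_\theta = s\) for all \(\theta\).
\begin{itemize}
\item \emph{Low bids are better winners.} This is immediate: \(\inf_{\theta'\in\mathcal{B}}(v-b) = v-b > v-a = \sup_{\theta\in\mathcal{A}}(v-a)\), whenever \(\mathcal{A}\) and \(\mathcal{B}\) are nonempty.
\item \emph{Winning cannot hurt.} This reads \(v-a \geq s\). It is a genuine restriction, so I split on it.
\end{itemize}
Suppose \(v - a > s\) and both \(\mathcal{A}\) and \(\mathcal{B}\) are nonempty. Then neither bid dominates the other, both conditions hold, and Proposition~\ref{fpa} delivers \(a \succeq_S b\).

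\emph{Degenerate cases.} These are \(v-a \le s\), or one of \(\mathcal{A}\), \(\mathcal{B}\) empty.
\begin{itemize}
\item If \(v - a \le s\), the statewise comparison on \(\Theta_a\) gives \(a_\theta = v-a \le s = b_\theta\). On \(\Theta_{both}\) we have \(b_\theta > a_\theta\), and on \(\Theta_{neither}\) the bidder is indifferent. Hence \(b\) weakly dominates \(a\) statewise.
\item If \(\mathcal{A}\) or \(\mathcal{B}\) is empty, one bid weakly dominates the other by definition.
\end{itemize}
In each such case I argue from the definition of \(\succeq_S\). Suppose one action weakly dominates the other statewise. Then \(\mathbb{E}_\mu u(a_\theta) \ge \mathbb{E}_\mu u(b_\theta)\) is equivalent to a support condition on \(\mu\):
\begin{itemize}
\item if \(a\) weakly dominates \(b\), every \(\mu\) works;
\item if \(b\) weakly dominates \(a\), exactly those \(\mu\) with \(\mu(\mathcal{B}) = 0\) work.
\end{itemize}
This holds because the integrand difference has a constant sign and vanishes exactly on \(\mathcal{C}\). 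Since \(\phi\) is strictly increasing, \(\hat u\) preserves the statewise sets \(\mathcal{A}\), \(\mathcal{B}\), \(\mathcal{C}\). Therefore \(\hat P_{a,b}(a) = P_{a,b}(a)\), and the required inclusion holds trivially. Dominance in either direction is thus compatible with \(a \succeq_S b\).

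\emph{Main obstacle.} The only subtle point is this degenerate bookkeeping: the boundary case \(v - a = s\) puts \(\Theta_a\) into \(\mathcal{C}\) rather than \(\mathcal{A}\). Measurability of \(\mathcal{B}\) also needs a word. It is open, being the set where a continuous function is strictly negative, so it is Borel. With the degenerate cases in hand, the rest is a direct application of Proposition~\ref{fpa}.
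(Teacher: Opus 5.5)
Your proposal is correct and follows the paper's route. The paper likewise checks \emph{winning cannot hurt} and \emph{low bids are better winners} and then invokes Proposition~\ref{fpa}, but it simply assumes non-dominance (``as $a$ is not dominated, $v-a>s$''), whereas you also show directly from the definition that weak dominance in either direction gives $\hat P_{a,b}(a)=P_{a,b}(a)$, which is sound and covers the cases the paper leaves implicit.
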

\begin{proof}
    As \(a\) is not dominated, \(v - a > s\). But then, \(\inf_{\theta \in \Theta} \left(v_{\theta} - a\right) = v- a > s = \sup_{\theta \in \Theta}s_{\theta}\), and \(\inf_{\theta' \in \mathcal{B}} \left(v_{\theta'} - b\right) = v-b > v- a = \sup_{\theta \in \mathcal{A}} \left(v_{\theta} - a\right)\), so winning cannot hurt and low bids are better winners. \Cref{fpa} then yields the result. \end{proof}

Next, we change the environment slightly and ask what happens when $v$ remains unknown, but $s$ is known so that \(s_{\theta} = s\) for all \(\theta \in \Theta\). 
\begin{lemma}
    In the known-outside-option environment, the larger bid is safer than the smaller bid, \(a \succeq_S b\), only if \(\mathcal{B} \cap \Theta_{a} = \emptyset\).
\end{lemma}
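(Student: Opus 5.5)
The plan is to prove the contrapositive with \Cref{thm:safety}: assume \(\mathcal{B} \cap \Theta_a \neq \emptyset\) and show that the characterization of safety fails. Throughout, as in the preceding corollary and in \Cref{thm:safety}, I take the two bids \(a > b\) to be undominated relative to each other, so that both \(\mathcal{A}\) and \(\mathcal{B}\) are nonempty. This standing assumption is needed. If \(\mathcal{A} = \emptyset\), then \(b\) weakly dominates \(a\), every belief in \(P_{a,b}(a)\) puts all its mass on \(\mathcal{C}\), and safety would hold vacuously. So the lemma should be read within the class of mutually undominated bids.

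The first step is to record the payoffs on \(\Theta_a\) in the known-outside-option environment, where \(s_\theta = s\) for all \(\theta\). For \(\theta \in \Theta_a\), bid \(a\) wins and bid \(b\) loses. Hence \(a_\theta = v_\theta - a\) and \(b_\theta = s\).

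The second step is to fix the relevant states. Pick \(\theta' \in \mathcal{B} \cap \Theta_a\). Membership in \(\mathcal{B}\) means \(a_{\theta'} < b_{\theta'}\), that is, \(v_{\theta'} - a < s\). Next pick any \(\theta \in \mathcal{A}\), which exists by non-dominance. By the earlier lemma, \(\mathcal{A} \subseteq \Theta_a\), so \(\theta \in \Theta_a\) and therefore \(b_\theta = s\).

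The third step is to check the second inequality of \Cref{thm:safety}, namely \(a_{\theta'} \geq b_\theta\) for all \(\theta \in \mathcal{A}\) and \(\theta' \in \mathcal{B}\). For the pair chosen above, \(a_{\theta'} = v_{\theta'} - a < s = b_\theta\), so this inequality fails. By \Cref{thm:safety}, \(a\) is therefore not safer than \(b\), which is the contrapositive of the claim.

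The key observation is that with a known outside option, the losing payoff \(s\) is the same in every pivotal state. Any pivotal state in which winning at \(a\) is worse than \(s\) then produces a payoff of \(a\) that lies below a payoff of \(b\), which breaks the convex-hull condition. The only delicate point is the dominance caveat, namely ensuring \(\mathcal{A} \neq \emptyset\). Everything else is a direct unwinding of definitions together with \(\mathcal{A} \subseteq \Theta_a\).
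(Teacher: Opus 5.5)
Your proof is correct and follows the paper's route: contraposition via \Cref{thm:safety}, taking \(\theta \in \mathcal{A}\) and \(\theta' \in \mathcal{B} \cap \Theta_a\) and using that \(b\)'s payoff is the known \(s\) on \(\Theta_a\). The only difference is which inequality you break: the paper breaks the first (\(a_\theta = v_\theta - a > s = b_{\theta'}\)), you break the second (\(a_{\theta'} < s = b_\theta\)), and since both fail either suffices; your explicit \(\mathcal{A} \neq \emptyset\) caveat is the same assumption the paper uses implicitly when it picks some \(\theta \in \mathcal{A}\).
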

\begin{proof}
    Suppose for the sake of contraposition that \(\mathcal{B} \cap \Theta_{a} \neq \emptyset\). Then for some \(\theta \in \mathcal{A}\) and \(\theta' \in \mathcal{B} \cap \Theta_{a}\), we have \(a_{\theta} = v_{\theta} - a > s = b_{\theta'}\), so \(a \not\succeq_S \ b\).
\end{proof}
We see that in the known-outside-option environment, for a high bid to become more attractive as a result of increased unknown aversion, it must be that if bidding higher alters the allocation outcome with respect to our bidder--i.e., if bidding high makes the difference for getting the object--such an outcome must always be positive. In a sense, this is an extremely strong requirement; in particular, it disciplines the possible values the bidder can have for the object to a significant degree. Another way to view this is that if the high bid is safer, winning the good cannot hurt the buyer.

\subsection{Second-Price Auctions}

The ``second-'' modifier is merely cosmetic. Our setting is sufficiently abstract so that the results hold for any \(k\)th-price auction for \(k \geq 2\) (where \(\gamma_\theta\) is the \(k\)-th highest price). Now, if the bidder bids \(a\) and gets the good she obtains \(v_{\theta} - \gamma_{\theta}\) and if she doesn't, she gets \(s_{\theta}\).

Again, we fix two bids \(a > b\) and utilize the partition specified in \eqref{exp1}. Now, however, \(\Theta_{both} \subseteq \mathcal{C}\), as \(a_{\theta} = v_{\theta} - \gamma_{\theta} = b_{\theta}\) for all \(\theta \in \Theta_{both}\). That is, for states in which both bids win the object, the bidder is indifferent between the two bids. Likewise, \(\Theta_{neither} \subseteq \mathcal{C}\), as \(a_{\theta} = s_{\theta} = b_{\theta}\) for all \(\theta \in \Theta_{a \leq \gamma_{\theta}}\). For states in which both bids lose the object, the bidder is also indifferent. It remains to compare \(v_{\theta} - \gamma_{\theta}\) and \(s_{\theta}\) for \(\theta \in \Theta_{a}\).

\begin{proposition}
    If the outside option is known, i.e., \(s_{\theta} = s \in \mathbb{R}\) for all \(\theta \in \Theta_{a}\), the smaller bid is safer than the larger bid, \(b \succeq_S a\).
\end{proposition}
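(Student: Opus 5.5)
We verify the criterion in \Cref{thm:safety} with the roles of the two actions swapped, so that $b$ plays the part of the candidate safer action. Using the partition \eqref{exp1}, we have already established that $\Theta_{both}\subseteq\mathcal{C}$ and $\Theta_{neither}\subseteq\mathcal{C}$ in the $k$th-price auction. Hence the sets of states where the bids strictly differ both lie inside the pivotal set:
\[
\{\theta\colon b_\theta>a_\theta\}\subseteq\Theta_a, \qquad \{\theta\colon a_\theta>b_\theta\}\subseteq\Theta_a .
\]
On $\Theta_a$, the lower bid $b$ loses and yields the known outside option, so $b_\theta=s$. The higher bid $a$ wins and yields $a_\theta=v_\theta-\gamma_\theta$.

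We may assume neither bid dominates the other, since otherwise the safety statement is trivial or excluded from \Cref{thm:safety}'s scope. Write $\mathcal{B}'=\{\theta\colon b_\theta>a_\theta\}$ for the states where $b$ is strictly better and $\mathcal{A}'=\{\theta\colon a_\theta>b_\theta\}$ for those where $a$ is strictly better. Applying \Cref{thm:safety} to $b$ as the safer action, we must show that for every $\theta\in\mathcal{B}'$ and $\theta'\in\mathcal{A}'$,
\[
a_{\theta'}\ge b_\theta \qquad\text{and}\qquad b_{\theta'}\ge a_\theta .
\]

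Both inequalities follow directly. For the first, $a_{\theta'}>b_{\theta'}=s=b_\theta$, using $\theta'\in\mathcal{A}'$ and the fact that $b$ pays the constant $s$ throughout $\Theta_a$. For the second, $b_{\theta'}=s=b_\theta>a_\theta$, using $\theta\in\mathcal{B}'$. In words, the lower bid is constant (equal to $s$) on every state where the two bids differ, while the higher bid sits above $s$ on some of those states and below it on others. Thus $b$'s payoffs lie in the convex hull of $a$'s payoffs, and $b$ is a mean-preserving-type compression of $a$.

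The only step needing care is the bookkeeping on tie states, where $b=\gamma_\theta$ or $a=\gamma_\theta$ and tie-breaking decides the allocation. I would check that the definitions in \eqref{exp1} put every state in which the allocation differs across the two bids into $\Theta_a$, and that on such states the lower bid indeed receives $s$. The hypothesis that $s$ is known on $\Theta_a$ is exactly what is used, and nothing about $v_\theta$ or $\gamma_\theta$ is needed.
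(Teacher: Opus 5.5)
Your proof is correct and is essentially the paper's argument: since $\Theta_{both}\cup\Theta_{neither}\subseteq\mathcal{C}$, every state where the bids differ lies in $\Theta_a$, where $b$ yields the constant $s$. The chains $a_{\theta'}>s=b_\theta$ and $b_{\theta'}=s>a_\theta$ then verify the criterion of Theorem~\ref{thm:safety} with the roles of $a$ and $b$ swapped. The tie-state bookkeeping you flag also goes through: whenever the lower bid wins, so does $a>b$, so the partition in \eqref{exp1} places every allocation-changing state in $\Theta_a$, with $b$ losing there.
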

\begin{proof}
    Let neither \(a\) nor \(b\) be dominated. Suppose first \(s_{\theta} = s \in \mathbb{R}\) for all \(\theta \in \Theta_{a}\). Then, for all \(\theta \in \mathcal{A}\) and \(\theta' \in \mathcal{B}\), \(a_{\theta} = v_{\theta} - \gamma_{\theta} > s = b_{\theta} = b_{\theta'} = s > v_{\theta'} - \gamma_{\theta} = a_{\theta'}\), as desired.\end{proof}

In contrast to the first-price auction, in which increased risk aversion favors \textit{higher} bids; in the second-price auction, under the natural restriction of a known outside option, increased risk aversion favors \textit{lower} bids. Intuitively, because we need only compare states in $\Theta_a$, the only reason to prefer $b$ is if winning with $a$ is worse than the outside option. Then, as the bidder becomes more risk averse, taking the sure outside option (which the lower bid delivers) is safer. 

\section{Equilibrium Analysis: First-price Auctions}\label{sec:fpa}

With the local, partial-equilibrium analysis of the previous section in hand, we now ask what changing all players' level of risk aversion does to \textit{equilibrium} bidding. When beliefs are over the others' equilibrium strategies and changing risk aversion changes best responses, how is equilibrium bidding affected?

Consider a first-price auction with \(n\ge 2\) bidders. We analyze the known-values environment, where types \(V_1,\dots,V_n\in\left[\ubar{v},\bar{v}\right]\) have an atomless \emph{exchangeable} joint distribution. Bidders have type-dependent outside options given by \(s\colon \left[\ubar{v},\bar{v}\right] \to \mathbb{R}\). Then, a bidder of type \(v\) who submits bid \(b\) obtains monetary payoff \(v-b\) if she wins and \(s(v)\) if she loses.

Pre-transformation, each bidder has a common utility function \(u\colon \mathbb{R} \to \mathbb{R}\), which we assume is strictly increasing, weakly concave, and \(C^1\). Post-transformation, let \(\hat u=\phi\circ u\), where \(\phi\) is strictly increasing, weakly concave, and \(C^1\). 

\begin{assumption}\label{stand1}
    We restrict attention to symmetric BNE in which bidders choose strictly-increasing bid functions \(\beta, \hat{\beta} \colon \left[\ubar{v},\bar{v}\right] \to \mathbb{R}\) that satisfy
\begin{enumerate}
    \item \(\beta, \hat{\beta}\) are continuous on \(\left[\ubar{v},\bar{v}\right]\) and \(C^1\) on \(\left(\ubar{v},\bar{v}\right)\);
    \item For each \(v\in\left(\ubar{v},\bar{v}\right)\), the optimal bid is an interior optimum; and
    \item \(\beta(v), \hat{\beta}(v) <v-s(v)\) for all \(v\in\left(\ubar{v},\bar{v}\right)\), i.e., no type chooses a weakly dominated bid.
\end{enumerate}
\end{assumption}
Take such a symmetric, strictly increasing bid function \(\beta\). We do the standard trick of noting that, as \(\beta\) is strictly increasing, any bid \(b\) can be written uniquely as \(b=\beta(t)\) for \(t=\beta^{-1}(b)\). It is, therefore, convenient to parametrize deviations by a ``report'' \(t\in\left[\ubar{v},\bar{v}\right]\), interpreted as bidding \(\beta(t)\). Because all bidders use the same strictly increasing \(\beta\), a type-\(v\) bidder who reports \(t\) wins if and only if \(T \equiv\max_{j\neq i}V_j\le t\). We do likewise for \(\hat{\beta}\) post-transformation. With the \(t\) notation in hand, we define the conditional win probability \(q(v,t) \coloneqq \mathbb P \left(T\le t \,\middle|\, V_i=v\right)\). By construction, \(q(v,t)\) depends only on the joint distribution of types and not on \(\beta\) or \(\hat{\beta}\). We assume that this function \(q\) is \(C^1\) in \(t\) on \(\left(\ubar{v},\bar{v}\right)\) and satisfies \(q(v,v)>0\) and \(\partial_t q(v,t)\vert_{t=v}>0\) for all \(v\in\left(\ubar{v},\bar{v}\right)\). We further assume that at the interior optimum the optimal report \(t=v\).

Let \(E_\uparrow(u)\) denote the set of such symmetric strictly increasing equilibria under \(u\), and define \(E_\uparrow(\hat u)\) analogously under \(\hat u\). Finally, we impose the common boundary normalization that all equilibria in \(E_\uparrow(u)\cup E_\uparrow(\hat u)\) satisfy
\[\hat\beta\left(\ubar{v}\right) = \beta\left(\ubar{v}\right)=\ubar{b}
\quad\text{for a fixed constant }\ubar{b}\in\mathbb R,
\]
so, in particular, any \(\beta\in E_\uparrow(u)\) and \(\hat\beta\in E_\uparrow(\hat u)\) satisfy \(\beta\left(\ubar{v}\right)=\hat\beta\left(\ubar{v}\right)\).

\begin{theorem}\label{thm:fpa_safety_equilibrium_cs}For any \(\beta\in E_\uparrow(u)\) and any \(\hat\beta\in E_\uparrow(\hat u)\), \(\hat\beta(v) \ge \beta(v)\) for all \(v\in\left[\ubar{v},\bar{v}\right]\).
\end{theorem}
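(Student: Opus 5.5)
The plan is to reduce the comparison of equilibria to a comparison of the ODEs they satisfy. The ODEs come from the first-order condition at truthful reporting. Then I will show, via the safety ranking of Proposition~\ref{fpa} (in its known-values form, the Corollary), that the transformed ODE has a pointwise larger right-hand side. A single-crossing argument anchored at the common boundary value \(\ubar{b}\) then finishes the proof.

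\textbf{Step 1: first-order conditions.} A type-\(v\) bidder who reports \(t\) gets expected utility
\[U(v,t)=q(v,t)\,u\bigl(v-\beta(t)\bigr)+\bigl(1-q(v,t)\bigr)\,u\bigl(s(v)\bigr).\]
By Assumption~\ref{stand1} the optimum is interior at \(t=v\), so \(\partial_t U(v,t)|_{t=v}=0\). This gives
\[\beta'(v)=\frac{\partial_t q(v,t)|_{t=v}}{q(v,v)}\cdot\frac{u\bigl(v-\beta(v)\bigr)-u\bigl(s(v)\bigr)}{u'\bigl(v-\beta(v)\bigr)}\eqqcolon g(v)\,R_u\bigl(v,\beta(v)\bigr).\]
Here \(g(v)>0\) depends only on the type distribution, so it is common to \(\beta\) and \(\hat\beta\). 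The same computation under \(\hat u\) gives \(\hat\beta'(v)=g(v)R_{\hat u}(v,\hat\beta(v))\). The object \(R_u(v,b)\) is the ``marginal tradeoff'' mentioned in the introduction. It is the value of a marginal increase in the probability of winning, relative to the outside option, measured in units of marginal utility of the winning payoff.

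\textbf{Step 2: \(R_{\hat u}\ge R_u\).} For fixed \(v\) and any \(b<v-s(v)\), set \(w=v-b>s(v)\). Then
\[R_{\hat u}(v,b)=\frac{\phi(u(w))-\phi(u(s))}{\phi'(u(w))\,u'(w)}.\]
Because \(\phi\) is concave and \(u(w)>u(s)\), we have \(\phi(u(w))-\phi(u(s))\ge \phi'(u(w))\bigl(u(w)-u(s)\bigr)\), and dividing yields \(R_{\hat u}\ge R_u\).

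This is exactly the local version of the safety statement. Fix the lottery that compares raising the bid from \(b\) to \(b+\varepsilon\): it gains \(u(w)-u(s)\) on the pivotal event and loses a small payment on the event where both bids win. By the Corollary (winning cannot hurt, since \(b<v-s(v)\), and low bids are better winners), the higher bid is safer. So whenever the \(u\)-bidder weakly prefers the higher bid, so does the \(\hat u\)-bidder. Taking \(\varepsilon\to0\), this says that the \(\hat u\)-bidder's marginal benefit-to-cost ratio is at least the \(u\)-bidder's, which is \(R_{\hat u}\ge R_u\). In the write-up I will present the concavity inequality as the proof and the safety argument as the interpretation.

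I also need monotonicity of \(R_u\) in \(b\). Raising \(b\) lowers \(w\), which lowers the numerator \(u(w)-u(s)\) and weakly raises \(u'(w)\) by concavity of \(u\). Hence \(R_u(v,\cdot)\) is strictly decreasing in \(b\) on the undominated region.

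\textbf{Step 3: comparison.} Let \(D=\hat\beta-\beta\), so \(D(\ubar v)=0\). Suppose \(D(v_1)<0\) for some \(v_1\). Let \(v_0=\sup\{v<v_1: D(v)\ge0\}\), so by continuity \(D(v_0)=0\) and \(D<0\) on \((v_0,v_1]\).

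On that interval \(\hat\beta<\beta\), and both bids are undominated by Assumption~\ref{stand1}(3). Using Step 2 and then the decreasing monotonicity of \(R_u\),
\[\hat\beta'(v)=g(v)R_{\hat u}\bigl(v,\hat\beta(v)\bigr)\ge g(v)R_u\bigl(v,\hat\beta(v)\bigr)>g(v)R_u\bigl(v,\beta(v)\bigr)=\beta'(v).\]
So \(D'>0\) on \((v_0,v_1)\). By the mean value theorem \(D(v_1)>D(v_0)=0\), which is a contradiction. The endpoint \(\bar v\) follows by continuity.

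\textbf{Main obstacle.} The delicate part is the boundary. At \(\ubar v\), \(q\) may vanish and the ODE may be singular, which is why the argument works on open intervals via the supremum construction rather than by integrating from \(\ubar v\). Only continuity of \(\beta,\hat\beta\) on the closed interval and the common value \(\ubar b\) are used there. A second point to check is that strict decrease of \(R_u\) in \(b\) needs \(u\) strictly increasing, which is assumed; weak concavity alone suffices for the \(u'\) term.
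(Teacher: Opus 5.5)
Your proof is correct, and its architecture matches the paper's. You use the same first-order condition $\beta'(v)=\Lambda(v)M_u(v-\beta(v);v)$ (your $g$ and $R_u$), a pointwise ordering of this marginal tradeoff, strict monotonicity of the tradeoff in the win surplus, and a comparison argument anchored at the common value $\ubar{b}$. Your supremum construction is equivalent to the paper's connected-component argument on $\{d<0\}$. Your chain $R_{\hat u}(v,\hat\beta(v))\ge R_u(v,\hat\beta(v))>R_u(v,\beta(v))$ uses monotonicity under $u$, while the paper's chain $M_{\hat u}(\hat x)>M_{\hat u}(x)\ge M_u(x)$ uses monotonicity under $\hat u$; the difference is immaterial.

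The genuine difference is how you prove the key inequality $M_{\hat u}\ge M_u$. The paper derives it from the bid-level safety ranking. Since $b+\delta$ is safer than $b$ in the known-values setting, the belief threshold at which the higher bid is preferred can only fall under $\hat u$. This gives
\[
\frac{u(x)-u(x-\delta)}{u(x-\delta)-u(s(v))}\ \ge\ \frac{\hat u(x)-\hat u(x-\delta)}{\hat u(x-\delta)-\hat u(s(v))},
\]
after which the paper divides by $\delta$ and lets $\delta\downarrow 0$. You instead get the inequality in one line from the tangent-line bound $\phi(u(w))-\phi(u(s))\ge\phi'(u(w))\,(u(w)-u(s))$ for concave $\phi$.

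Your route is shorter and self-contained. It needs no appeal to Theorem~\ref{thm:safety}, no implicit two-state belief construction, and no limit, and it makes clear that only concavity of $\phi$ is used. The paper takes the longer route on purpose: it presents the equilibrium comparative static as the infinitesimal version of ``higher first-price bids are safer,'' which is the paper's thesis.

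One small point you could add: dividing by $u'$ and $\phi'$ is harmless. A strictly increasing, concave $C^1$ function on an open interval has strictly positive derivative, since a zero of the nonincreasing derivative would force the function to be constant from that point on.
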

\begin{proof}
    See Appendix \ref{a:fpa_eq}
\end{proof}

\Cref{thm:fpa_safety_equilibrium_cs} provides an equilibrium counterpart to the bid-level “insurance” mechanism from
\S\ref{sec:bidra}. In a symmetric strictly increasing equilibrium, a type-\(v\) bidder can be viewed as choosing
a report \(t\), interpreted as bidding \(\beta(t)\). Increasing \(t\) raises the win probability \(q(v,t)\)
but lowers the payoff conditional on winning from \(v-\beta(v)\) to \(v-\beta(t)\), while the losing
payoff remains the known outside option \(s(v)\). Thus, the local deviation trades off a marginal
increase in winning probability against a marginal reduction in win surplus.

The equilibrium first-order condition can be written as
\[
\beta'(v)=\Lambda(v) M_u\left(v-\beta(v);v\right),
\quad \text{where} \quad
\Lambda(v)\equiv \frac{\partial_t q(v,t)\rvert_{t=v}}{q(v,v)} > 0,\]
and where \[
M_u(x;v)\coloneqq \frac{u(x)-u(s(v))}{u'(x)}
\]
captures a type-\(v\) bidder's marginal willingness to trade off conditional win surplus \(x\) for a
marginal increase in win probability away from the outside option. Under a concave transformation
\(\hat u=\phi\circ u\), this marginal tradeoff increases, \(M_{\hat u}(x;v)\ge M_u(x;v)\); that is, as higher bids are safer, more risk-averse bidders value this margin more. Because \(\Lambda(v)\) is pinned down by the
(exchangeable) joint distribution of types, the strengthened marginal valuation under \(\hat u\),
together with the common boundary normalization, yields the pointwise ordering \(\hat\beta(v)\ge\beta(v)\)
in \Cref{thm:fpa_safety_equilibrium_cs}.

\section{Equilibrium Analysis: Second-price Auctions}\label{sec:spa}

Consider now a second-price (Vickrey) auction with \(n\ge 2\) bidders. We maintain the assumptions that types \(V_1,\dots,V_n\in\left[\ubar{v},\bar{v}\right]\) have an atomless \emph{exchangeable} joint distribution and that bidders have known type-dependent outside options \(s\colon \left[\ubar{v},\bar{v}\right]\to\mathbb{R}\). However, we now assume that there is valuation uncertainty at the point of bidding, formalized as follows: if bidder \(i\) with type \(v\) wins and pays price \(p\), her monetary payoff is \(W(v,V_{-i})-p\), where \(V_{-i}\coloneqq (V_j)_{j\neq i}\) and \(W(v,V_{-i})\) is an \(\mathbb{R}\)-valued random variable. If she loses, her monetary payoff is \(s(v)\).

Pre-transformation, each bidder has a common utility function \(u\colon \mathbb{R}\to\mathbb{R}\), which we assume is strictly increasing, weakly concave, and \(C^1\). Post-transformation, let \(\hat u\coloneqq \phi\circ u\), where \(\phi\) is strictly increasing, weakly concave, and \(C^1\).

We maintain Assumption~\ref{stand1} (with the third bullet modified appropriately). Likewise, as before, we parametrize deviations by a report \(t\in\left[\ubar{v},\bar{v}\right]\), interpreted as bidding \(\beta(t)\). Recall \(T\coloneqq \max_{j\neq i}V_j\), and 
\(
q(v,t)\coloneqq \mathbb{P} \left(T\le t\,\middle|\,V_i=v\right).
\) We assume that \(q\) is \(C^1\) in \(t\) on \(\left(\ubar{v},\bar{v}\right)\) and satisfies \(\partial_t q(v,t)\vert_{t=v}>0\) for all \(v\in\left(\ubar{v},\bar{v}\right)\). Equivalently, \(T\mid V_i=v\) admits a conditional density \(f_{T\mid v}(t)\coloneqq \partial_t q(v,t)\) that is continuous and strictly positive at \(t=v\). We further assume that at the interior optimum the optimal report \(t=v\).

Let \(E_\uparrow(u)\) denote the set of such symmetric strictly increasing equilibria under \(u\), and define \(E_\uparrow(\hat u)\) analogously under \(\hat u\).

We also assume that for every \(v\in\left(\ubar{v},\bar v\right)\), every \(\beta\in E_{\uparrow}\left(u\right)\), and every \(\hat\beta\in E_{\uparrow}\left(\hat u\right)\), the functions
\[
z\mapsto E\left[u\left(W\left(v,V_{-i}\right)-\beta\left(z\right)\right)\mid V_i=v,T=z\right]
\]
and
\[
z\mapsto E\left[\hat u\left(W\left(v,V_{-i}\right)-\hat\beta\left(z\right)\right)\mid V_i=v,T=z\right]
\]
are continuous at \(z=v\).
\begin{theorem}\label{thm:spa_safety_equilibrium_cs}
For any \(\beta\in E_\uparrow(u)\) and any \(\hat\beta\in E_\uparrow(\hat u)\), \(\hat\beta(v)\le \beta(v)\) for all \(v\in\left[\ubar{v},\bar{v}\right]\).
\end{theorem}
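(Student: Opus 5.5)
The plan is to exploit the fact that, unlike the first-price case, the second-price first-order condition is \emph{pointwise} rather than a differential equation. The equilibrium bid $\beta(v)$ should therefore be pinned down by an indifference condition at each $v$, with no boundary normalization needed. Once that condition is in hand, the comparison reduces to the bid-level proposition of \S\ref{sec:bidra}: with a known outside option the lower bid is safer. Concretely, this comes down to one application of Jensen's inequality to $\phi$.

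\textbf{Step 1 (expected utility of a report).} Fix $\beta\in E_\uparrow(u)$ and a type $v\in(\ubar{v},\bar v)$. A bidder of type $v$ who reports $t$ (bids $\beta(t)$) wins iff $T\le t$, and then pays $\beta(T)$. Her expected utility is therefore
\[
U(v,t)=\int_{\ubar{v}}^{t} E\left[u\left(W(v,V_{-i})-\beta(z)\right)\mid V_i=v,T=z\right] f_{T\mid v}(z)\,dz+\left(1-q(v,t)\right)u(s(v)).
\]
Write $g(z)$ for the conditional expectation in the integrand. The continuity assumption on $g$ at $z=v$, together with continuity and positivity of $f_{T\mid v}$ at $v$, lets me differentiate in $t$ at $t=v$ via the fundamental theorem of calculus (Lebesgue-point argument). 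This gives
\[
\partial_t U(v,t)\big|_{t=v}=\Big(g(v)-u(s(v))\Big) f_{T\mid v}(v).
\]
Because the optimum is interior with optimal report $t=v$, and $f_{T\mid v}(v)>0$, we obtain
\[
E\left[u\left(W(v,V_{-i})-\beta(v)\right)\mid V_i=v,T=v\right]=u(s(v)).
\]
The same argument under $\hat u$ gives the analogous identity for $\hat\beta(v)$. The key point is that the conditional law of $W(v,V_{-i})$ given $\{V_i=v,T=v\}$ depends only on the primitive type distribution, not on the bid function. I expect this step to be the main technical obstacle: one must carefully justify differentiating the integral at $t=v$ using only continuity at that single point, and confirm that the price upon winning is $\beta(T)$, so that the integrand at $z=v$ involves $\beta(v)$.

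\textbf{Step 2 (safety via Jensen).} Let $X\coloneqq W(v,V_{-i})-\beta(v)$ under this conditional law. Step 1 says the bidder with utility $u$ is exactly indifferent between the risky lottery $u(X)$ (winning at the margin) and the sure outside option $u(s(v))$. This is precisely the $\Theta_a$ comparison in the second-price proposition of \S\ref{sec:bidra}. Since $\phi$ is concave, Jensen's inequality yields
\[
E\left[\hat u(X)\right]=E\left[\phi(u(X))\right]\le \phi\left(E[u(X)]\right)=\phi\left(u(s(v))\right)=\hat u(s(v)).
\]

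\textbf{Step 3 (monotonicity and conclusion).} The map
\[
b\mapsto h(b)\coloneqq E\left[\hat u\left(W(v,V_{-i})-b\right)\mid V_i=v,T=v\right]
\]
is strictly decreasing because $\hat u$ is strictly increasing. By Step 1 applied to $\hat u$, $h(\hat\beta(v))=\hat u(s(v))$, while Step 2 gives $h(\beta(v))\le \hat u(s(v))$. Hence $h(\beta(v))\le h(\hat\beta(v))$, and strict monotonicity of $h$ forces $\hat\beta(v)\le\beta(v)$ for every interior $v$. Finally, the endpoints $\ubar{v},\bar v$ follow by continuity of $\beta$ and $\hat\beta$ on $\left[\ubar{v},\bar{v}\right]$ (Assumption~\ref{stand1}), which passes the weak inequality to the limit.
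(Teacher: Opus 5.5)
Your proposal is correct and follows essentially the same route as the paper: you obtain pivotal indifference from the first-order condition at $t=v$, derive a one-sided inequality under $\hat u$ at the old price $\beta(v)$, and conclude from strict monotonicity of $b\mapsto E[\hat u(W(v,V_{-i})-b)\mid V_i=v,T=v]$, with the endpoints handled by continuity. The only difference is cosmetic: you get the one-sided inequality directly from Jensen's inequality for the concave $\phi$, whereas the paper cites the safety ranking (the sure outside option is safer than the risky win payoff), and in this setting that ranking reduces to exactly your Jensen step.
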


The direction reverses in the second-price environment because a bidder's bid primarily affects
\emph{whether} she is exposed to a risky win payoff, not the payment conditional on winning in
pivotal states. In a symmetric strictly increasing equilibrium, when a type-\(v\) bidder reports \(t\)
(sending bid \(\beta(t)\)), she wins if and only if \(T \le t\). If she wins when \(T=z\),
she pays \(\beta(z)\), which is pinned down by the opponent's type rather than by her own bid.

At an interior optimum, the equilibrium condition reduces to pivotal indifference: conditional on
\(\{V_i=v, T=v\}\), the bidder is indifferent between losing (yielding the sure payoff \(s(v)\)) and
winning at the pivotal price \(\beta(v)\) (yielding the risky payoff \(W(v,V_{-i})-\beta(v)\)):
\[
\mathbb{E}\!\left[u\left(W(v,V_{-i})-\beta(v)\right)\,\middle|\, V_i=v, T=v\right]=u(s(v)).
\]
This local comparison is precisely between a sure outside option and a risky win payoff. Since the
outside option is the safer alternative, under a more risk-averse utility \(\hat u=\phi\circ u\) the bidder
weakly prefers to avoid winning at the old pivotal price:
\[
\mathbb{E}\!\left[\hat u\left(W(v,V_{-i})-\beta(v)\right)\,\middle|\, V_i=v, T=v\right]\le \hat u(s(v)).
\]
To restore pivotal indifference under \(\hat u\), the price at which the bidder is willing to become
pivotal must fall; equivalently, her bid must decrease. This yields the pointwise ordering
\(\hat\beta(v)\le \beta(v)\) in \Cref{thm:spa_safety_equilibrium_cs}. In first-price auctions, bidding higher insures
against the risky event of losing; here, bidding higher mainly increases exposure to risk conditional
on winning, so greater risk aversion generates precautionary downward shading.

\subsection{Uniform-Price Auctions}

The main analysis of this section extends in a straightforward manner to multi-unit uniform-price auctions with \(n\ge 3\) unit-demand bidders and
\(K\in\{2,\dots,n-1\}\) identical units. The allocation rule assigns the \(K\) units to the
\(K\) highest bids, and the per-unit price is the highest losing bid (equivalently, the \((K+1)\)st highest bid). We maintain the assumptions from this section, then note that the identical proof as for Theorem \ref{thm:spa_safety_equilibrium_cs} goes through, once we replace the highest opponent type \(T\) with $T^{(K)}$, the \(K\)-th highest opponent type--the \(K\)-th order statistic
of $(V_j)_{j\ne i}$ in descending order--and define
\[
q^{(K)}(v,t)\coloneqq \mathbb P(T^{(K)}\le t\mid V_i=v).
\]
\begin{corollary}
    For any $\beta\in E_\uparrow(u)$ and any $\hat\beta\in E_\uparrow(\hat u)$, $\hat\beta(v)\le \beta(v)$ for all \(v\in\left[\ubar{v},\bar v\right]\).
\end{corollary}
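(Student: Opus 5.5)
The plan is to reduce the uniform-price problem to the same pivotal-indifference argument used for \Cref{thm:spa_safety_equilibrium_cs}, with \(T\) replaced by \(T^{(K)}\). Fix \(\beta\in E_\uparrow(u)\), \(\hat\beta\in E_\uparrow(\hat u)\), and an interior type \(v\).

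\textbf{Step 1: interim payoff.} Under the symmetric increasing profile, a type-\(v\) bidder reporting \(t\) bids \(\beta(t)\). She receives a unit if and only if fewer than \(K\) opponents bid above her, that is, iff \(T^{(K)}\le t\). When she wins, the uniform price is the highest losing bid, which is \(\beta(T^{(K)})\). This price is set by an opponent's type, not by her own report. Her interim utility is therefore
\[
U(v,t)=\int_{\ubar v}^{t} E\left[u\left(W(v,V_{-i})-\beta(z)\right)\mid V_i=v,\,T^{(K)}=z\right]dq^{(K)}(v,z)+\left(1-q^{(K)}(v,t)\right)u(s(v)).
\]
This has exactly the structure of the second-price objective.

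\textbf{Step 2: first-order condition.} Differentiate \(U\) in \(t\) at \(t=v\), using the assumed continuity of the conditional expectation at \(z=v\) (now conditioned on \(T^{(K)}=z\)). The derivative is \(\partial_t q^{(K)}(v,t)\rvert_{t=v}\) times
\[
E\left[u\left(W(v,V_{-i})-\beta(v)\right)\mid V_i=v,T^{(K)}=v\right]-u(s(v)).
\]
Interiority makes this zero, and \(\partial_t q^{(K)}(v,v)>0\) gives the pivotal indifference condition.

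\textbf{Step 3: safety comparison.} Conditional on the pivotal event, winning at price \(\beta(v)\) is a risky lottery, and losing is the sure amount \(s(v)\). Because \(u\) assigns equal expected utility to the two, Jensen applied to the concave \(\phi\) gives
\[
E[\hat u(W-\beta(v))\mid\cdot]\le \phi\left(E[u(W-\beta(v))\mid\cdot]\right)=\hat u(s(v)).
\]
This is the equilibrium form of the earlier proposition that the smaller bid is safer. Let
\[
g(p)\coloneqq E[\hat u(W-p)\mid V_i=v,T^{(K)}=v]-\hat u(s(v)).
\]
Then \(g\) is strictly decreasing in \(p\), and Step 2 applied to \(\hat\beta\) gives \(g(\hat\beta(v))=0\). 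Since \(g(\beta(v))\le 0=g(\hat\beta(v))\), we conclude \(\hat\beta(v)\le\beta(v)\).

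\textbf{Step 4: endpoints.} Extend the inequality to \(\ubar v\) and \(\bar v\) by continuity of \(\beta\) and \(\hat\beta\).

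The main obstacle is Step 2. I must check that the event \(\{T^{(K)}\le t\}\) is exactly the winning event, with ties having probability zero since the distribution is atomless. I must also confirm that the pricing rule really makes the paid price \(\beta(T^{(K)})\) for every winner, including a winner at the margin. Exchangeability, together with the assumed regularity of \(q^{(K)}\), justifies differentiating under the integral. Once this is verified, the rest is literally the second-price proof.
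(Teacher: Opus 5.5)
Your proposal is correct and is essentially the paper's argument. The paper reruns the proof of \Cref{thm:spa_safety_equilibrium_cs} with $T$ replaced by $T^{(K)}$: it derives pivotal indifference from the first-order condition and then uses that the sure outside option is safer than the risky win, which you implement directly via Jensen's inequality for the concave $\phi$. One minor correction to your Step 2: the derivative at $t=v$ follows from the fundamental theorem of calculus, using the assumed continuity of the integrand at $z=v$; exchangeability plays no role there, and ties are negligible because $q^{(K)}(v,\cdot)$ is continuous.
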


\appendix

\section{Omitted Proofs}

\subsection{\Cref{fpa} Proof}\label{fpaproof}
\begin{proof}
     As \(b\) does not dominate \(a\), \(\mathcal{A} \neq \emptyset\), which implies that there exists \(\theta \in \Theta_{a}\) such that \(v_\theta - a > s_\theta\). By our previous discussion, \(\mathcal{A} \subseteq \Theta_{a}\). We see that for all \(\theta \in \mathcal{A}\), we have
    \[a_\theta = v_\theta - a > s_\theta = b_\theta\text{,}\]
    and for all \(\theta' \in \mathcal{B}\), \(a_{\theta'} =  v_{\theta'} - a\). Finally,
    \[b_{\theta'} = \begin{cases}
    v_{\theta'} - b \quad &\text{for all} \quad \theta' \in \Theta_{both}\\
        s_{\theta'} \quad &\text{for all} \quad \theta' \in \mathcal{B} \cap \Theta_{a}\text{.}
    \end{cases}\]

    Therefore, \(a \succeq_S b\) if and only if 
    \[\label{in1}\tag{\(A1\)}\inf_{\theta' \in \mathcal{B}} \left(v_{\theta'} - a\right) \geq \sup_{\theta \in \mathcal{A}} s_\theta\text{,}\]
    and
    \[\label{in2}\tag{\(A2\)}\min\left\{\inf_{\theta' \in \Theta_{both}} \left(v_{\theta'} - b\right) , \inf_{\theta' \in \mathcal{B} \cap \Theta_{a}} s_{\theta'} \right\} \geq \sup_{\theta \in \mathcal{A}} \left(v_{\theta} - a\right)\text{.}\]

    Now suppose winning cannot hurt. Thus, as \(\inf_{\theta \in \Theta} \left(v_\theta - a\right) \geq \sup_{\theta \in \Theta}s_{\theta}\), \eqref{in1} must hold. Moreover, it must also be the case that \(\mathcal{B} \cap \Theta_{a} = \emptyset\), i.e., \(\mathcal{B} = \Theta_{b\geq\gamma_{\theta} }\). Thus, \eqref{in2} simplifies to
    \[\inf_{\theta' \in \mathcal{B}} \left(v_{\theta'} - b\right) \geq \sup_{\theta \in \mathcal{A}} \left(v_{\theta} - a\right)\text{,}\]
    which is low bids are better winners. 
\end{proof}

\subsection{\Cref{thm:fpa_safety_equilibrium_cs} Proof}\label{a:fpa_eq}

\begin{proof}
The type-\(v\) expected utility from reporting \(t\) under utility \(u\) is
\[
\Psi_u(v,t;\beta)
\coloneqq
q(v,t)u\left(v-\beta(t)\right)
+
\left(1-q(v,t)\right)u\left(s(v)\right).
\]
If \(\beta\in E_\uparrow(u)\), then for each \(v\), the report \(t=v\) maximizes \(\Psi_u(v,t;\beta)\).

Next, take \(v\in\left(\ubar{v},\bar{v}\right)\). By the interiority assumption, the maximizer is interior at \(t=v\), so
\[
\partial_t \Psi_u(v,t;\beta)\big\vert_{t=v}=0.
\]
We differentiate \(\Psi_u\) with respect to \(t\):
\[
\partial_t \Psi_u(v,t;\beta)
=
\partial_t q(v,t)\left(u\left(v-\beta(t)\right)-u\left(s(v)\right)\right)
-
q(v,t) u' \left(v-\beta(t)\right) \beta'(t).
\]
Evaluating at \(t=v\) and writing \(x(v) \coloneqq v-\beta(v)\) yields
\[
\beta'(v)
=
\Lambda(v) M_u\left(x(v);v\right),
\quad \text{where} \quad
\Lambda(v) \coloneqq \frac{\partial_t q(v,t)\vert_{t=v}}{q(v,v)}  > 0,
\quad \text{and}
 \quad M_u(x;v) \coloneqq \frac{u(x)-u\left(s(v)\right)}{u'(x)}.
\]
The analogous relation holds for any \(\hat\beta\in E_\uparrow(\hat u)\). Writing \(\hat x(v) \coloneqq v-\hat\beta(v)\),
\[
\hat\beta'(v)
=
\Lambda(v) M_{\hat u} \left(\hat x(v);v\right),
\quad \text{where} \quad
M_{\hat u}(x;v) \coloneqq \frac{\hat u(x)-\hat u\left(s(v)\right)}{\hat u'(x)}.
\]

\begin{claim}
    \(M_{\hat u}(x;v)\ge M_u(x;v)\).
\end{claim}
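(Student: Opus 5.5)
The plan is to prove the inequality directly from the supergradient inequality for the concave transformation \(\phi\). This is the analytic form of the safety statement: the higher bid is safer, by the Corollary in the known-values environment. I expect the argument to hold for every \(x\) at which both ratios are defined, not only for \(x>s(v)\).

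First I rewrite both ratios in utility units. Fix \(v\) and \(x\), and set \(y\coloneqq u(x)\) and \(z\coloneqq u(s(v))\). By the chain rule, \(\hat u'(x)=\phi'(y)\,u'(x)\), so
\[
M_{\hat u}(x;v)=\frac{\phi(y)-\phi(z)}{\phi'(y)\,u'(x)},\qquad M_u(x;v)=\frac{y-z}{u'(x)}.
\]
Both expressions share the factor \(1/u'(x)\). It therefore suffices to show
\[
\frac{\phi(y)-\phi(z)}{\phi'(y)}\ \ge\ y-z,
\]
provided \(u'(x)>0\) and \(\phi'(y)>0\), since multiplying by \(1/u'(x)>0\) preserves the inequality.

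Second, I use the fact that a \(C^1\) concave \(\phi\) lies below its tangent line at \(y\): \(\phi(z)\le \phi(y)+\phi'(y)(z-y)\) for every \(z\). Rearranging gives \(\phi(y)-\phi(z)\ge \phi'(y)(y-z)\), and dividing by \(\phi'(y)>0\) yields the displayed inequality. No case split on the sign of \(y-z\) is needed, because the tangent-line bound holds on both sides of \(y\).

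In the relevant region the numerator is in fact positive. Assumption~\ref{stand1}(3) gives \(x(v)=v-\beta(v)>s(v)\), so \(y>z\), and then both \(M\)'s are positive. Since \(\hat u\) is applied at \(\hat x(v)\), the claim is needed for the \(x\)'s that actually arise, namely \(x>s(v)\).

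The only delicate step is making sure the ratios are well defined. The denominators must be strictly positive, that is, \(u'(x)>0\) and \(\phi'(u(x))>0\) at the relevant points.

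\begin{itemize}
\item These conditions are implicit in the first-order conditions already derived. Those conditions divide by \(q(v,v)\,u'(x(v))\) and by \(q(v,v)\,\hat u'(\hat x(v))\), which presupposes \(u'>0\) and \(\hat u'>0\) there.
\item Alternatively, if \(\phi'(y)=0\) while \(\phi(y)>\phi(z)\), one can read \(M_{\hat u}=+\infty\), and the inequality holds trivially.
\end{itemize}

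I would state this nondegeneracy explicitly and then conclude the claim.
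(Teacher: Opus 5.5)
Your argument is correct, but it takes a different route from the paper. The paper works through the safety ranking. For \(x>s(v)\) and \(\delta\in(0,x-s(v))\), it compares two bids with win payoffs \(x\) and \(x-\delta\) and common losing payoff \(s(v)\). It then uses the known-values Corollary (the higher bid is safer) to obtain the finite-difference inequality
\[
\frac{u(x)-u(x-\delta)}{u(x-\delta)-u(s(v))}\ \ge\ \frac{\hat u(x)-\hat u(x-\delta)}{\hat u(x-\delta)-\hat u(s(v))}.
\]
Finally it divides by \(\delta\), lets \(\delta\downarrow 0\), and inverts the limiting inequality \(u'(x)/(u(x)-u(s(v)))\ge \hat u'(x)/(\hat u(x)-\hat u(s(v)))\).

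You bypass safety entirely: after the chain rule \(\hat u'(x)=\phi'(u(x))\,u'(x)\), you apply the supergradient inequality for \(\phi\). The two proofs rest on the same fact at different scales. In utility units, the paper's finite-difference inequality is the chord-slope inequality for the concave \(\phi\) on the adjacent intervals \([u(s(v)),u(x-\delta)]\) and \([u(x-\delta),u(x)]\). Its \(\delta\downarrow 0\) limit is exactly your bound \(\phi(y)-\phi(z)\ge\phi'(y)(y-z)\).

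The paper's route buys the interpretation the section advertises: \(M\) rises with risk aversion because the local upward deviation is a safety improvement. The price is a limiting argument, a reciprocal step, and the restriction \(x>s(v)\), which is needed for positive denominators and for the safety ranking itself. Your route is shorter, needs no limit, and is more general. It gives \(M_{\hat u}(x;v)\ge M_u(x;v)\) for every \(x\), including \(x\le s(v)\).

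One simplification: your nondegeneracy caveat follows from the standing assumptions, so you need neither the appeal to the first-order conditions nor the ``\(+\infty\)'' reading. A \(C^1\), weakly concave, strictly increasing function on an open interval has a strictly positive derivative. If the derivative vanished at a point, concavity would make it nonpositive to the right, contradicting strict monotonicity. Since \(u\) is continuous and strictly increasing on \(\mathbb{R}\), this gives \(u'(x)>0\). Its range is also an open interval, so \(u(x)\) has room to its right in the domain of \(\phi\), and the same argument gives \(\phi'(u(x))>0\).
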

\begin{proof}
    Take arbitrary \(v\in\left(\ubar{v},\bar{v}\right)\) and  \(x>s(v)\), and choose \(\delta\in(0,x-s(v))\). Consider two bids \(b<b+\delta\) that yield win payoffs \(x=v-b\) and \(x-\delta=v-(b+\delta)\), and losing payoff \(s(v)\). As \(b+\delta\) is safer than \(b\), we have
    \[
\frac{u(x)-u(x-\delta)}{u(x-\delta)-u\left(s(v)\right)}
 \geq 
\frac{\hat u(x)-\hat u(x-\delta)}{\hat u(x-\delta)-\hat u\left(s(v)\right)}.
\]
Dividing both sides by \(\delta>0\), we obtain
\[
\frac{\frac{u\left(x\right)-u\left(x-\delta\right)}{\delta}}{u\left(x-\delta\right)-u\left(s(v)\right)}\ge\frac{\frac{\hat u\left(x\right)-\hat u\left(x-\delta\right)}{\delta}}{\hat u\left(x-\delta\right)-\hat u\left(s(v)\right)}.
\]
Now let \(\delta\downarrow0\). By differentiability of \(u\) and \(\hat u\) at \(x\),
\[
\frac{u\left(x\right)-u\left(x-\delta\right)}{\delta}\to u'\left(x\right),\qquad \text{and} \qquad\frac{\hat u\left(x\right)-\hat u\left(x-\delta\right)}{\delta}\to\hat u'\left(x\right),
\]
and by continuity,
\[
u\left(x-\delta\right)-u\left(s(v)\right)\to u\left(x\right)-u\left(s(v)\right)>0,\quad \text{and} \quad\hat u\left(x-\delta\right)-\hat u\left(s(v)\right)\to\hat u\left(x\right)-\hat u\left(s(v)\right)>0.
\]
Therefore,
\[
\frac{u'\left(x\right)}{u\left(x\right)-u\left(s(v)\right)}\ge\frac{\hat u'\left(x\right)}{\hat u\left(x\right)-\hat u\left(s(v)\right)},
\]
as desired.
\end{proof}

Next we argue that for a fixed \(v\), the function \(x\mapsto M_{\hat u}(x;v)\) is strictly increasing on \(x>s(v)\). Write \(s \equiv  s(v)\), and take \(y>x>s\). Then
\[
M_{\hat u}(y;v)-M_{\hat u}(x;v)
=
\frac{\hat u(y)-\hat u(s)}{\hat u'(y)}
-
\frac{\hat u(x)-\hat u(s)}{\hat u'(x)}.
\]
Add and subtract \(\hat u(x)/\hat u'(y)\) to obtain
\[
M_{\hat u}(y;v)-M_{\hat u}(x;v)
=
\left(\hat u(x)-\hat u(s)\right)\left(\frac{1}{\hat u'(y)}-\frac{1}{\hat u'(x)}\right)
+
\frac{\hat u(y)-\hat u(x)}{\hat u'(y)}.
\]
Since \(\hat u\) is concave, \(\hat u'\) is nonincreasing, so \(\hat u'(y)\le \hat u'(x)\) and the first term is weakly positive.
Since \(\hat u\) is strictly increasing, \(\hat u(y)-\hat u(x)>0\) and \(\hat u'(y)>0\), so the second term is strictly positive.
Hence, \(M_{\hat u}(y;v)-M_{\hat u}(x;v)>0\).

Now fix \(\beta\in E_\uparrow(u)\) and \(\hat\beta\in E_\uparrow(\hat u)\), and define \(d(v) \coloneqq \hat\beta(v)-\beta(v)\).
\begin{claim}\label{eq:key_imp}
    For every \(v\in\left(\ubar{v},\bar{v}\right)\), if \(d(v)<0\) then \(d'(v)>0\).
\end{claim}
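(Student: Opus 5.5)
We compare the two first-order conditions at a point \(v\in\left(\ubar{v},\bar{v}\right)\) where \(d(v)<0\), i.e. \(\hat\beta(v)<\beta(v)\), equivalently \(\hat x(v)=v-\hat\beta(v)>v-\beta(v)=x(v)\). From the ODE representations derived above,
\[
d'(v)=\hat\beta'(v)-\beta'(v)=\Lambda(v)\left(M_{\hat u}\left(\hat x(v);v\right)-M_u\left(x(v);v\right)\right),
\]
and since \(\Lambda(v)>0\) it suffices to show that the bracket is strictly positive. The plan is to insert the intermediate term \(M_{\hat u}\left(x(v);v\right)\), writing the bracket as
\[
\left[M_{\hat u}\left(\hat x(v);v\right)-M_{\hat u}\left(x(v);v\right)\right]+\left[M_{\hat u}\left(x(v);v\right)-M_u\left(x(v);v\right)\right].
\]

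The second bracket is weakly positive by the first claim, \(M_{\hat u}(x;v)\ge M_u(x;v)\), applied at \(x=x(v)\). This claim was proved for \(x>s(v)\). The requirement holds here: by the third item of Assumption~\ref{stand1}, \(\beta(v)<v-s(v)\), so \(x(v)>s(v)\). The first bracket is strictly positive by the strict monotonicity of \(x\mapsto M_{\hat u}(x;v)\) on \(x>s(v)\), just established. This applies because \(\hat x(v)>x(v)>s(v)\), so both points lie in the region where monotonicity was shown. Adding the two brackets gives a strictly positive sum, hence \(d'(v)>0\).

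The only step needing care is checking that every evaluation point lies in the domain \(x>s(v)\), where the claim and the monotonicity argument are valid. This is exactly where the no-dominated-bid condition enters; in particular it guarantees that \(u(x)-u(s(v))>0\), so the ratios defining \(M_u\) and \(M_{\hat u}\) are well behaved. Differentiability of \(d\) at interior \(v\) comes from the \(C^1\) requirement in Assumption~\ref{stand1}. The FOC identities themselves follow from the interior-optimum assumption together with \(q(v,v)>0\).
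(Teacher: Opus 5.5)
Your proposal is correct and matches the paper's proof: the paper also chains $M_{\hat u}(\hat x(v);v)>M_{\hat u}(x(v);v)\ge M_u(x(v);v)$, using the strict monotonicity of $M_{\hat u}$ and the first claim, and then multiplies by $\Lambda(v)>0$. You also check explicitly, via the no-dominated-bid condition, that $x(v)>s(v)$ so that both results apply; the paper leaves this step implicit.
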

\begin{proof}
    If \(d(v)<0\) then \(\hat x(v)=v-\hat\beta(v)>v-\beta(v)=x(v)\). Consequently, \(M_{\hat u}(\hat x(v);v)>M_{\hat u}(x(v);v)\ge M_u(x(v);v)\). Since \(\Lambda(v)>0\), the FOCs imply \(d'(v)=\Lambda(v)\left(M_{\hat u}(\hat x(v);v)-M_u(x(v);v)\right)>0\).\end{proof}

Now suppose for the sake of contradiction that \(d(v_0)<0\) for some \(v_0\in\left(\ubar{v},\bar{v}\right)\).
Let
\[
\mathcal N \coloneqq \{v\in\left(\ubar{v},\bar{v}\right)\colon d(v)<0\}.
\]
Then \(\mathcal N\) is open, hence, a union of disjoint open intervals. Let \((a,b)\) be any connected component of \(\mathcal N\).
By the continuity of \(d\), we have \(d(a)=0\) if \(a>\ubar{v}\), and by the boundary normalization \(d(\ubar{v})=0\) we also have \(d(a)=0\) if \(a=\ubar{v}\).
Moreover, for every \(v\in(a,b)\) we have \(d(v)<0\), so by Claim \ref{eq:key_imp} we have \(d'(v)>0\) on \((a,b)\).
By the mean value theorem, this implies \(d\) is strictly increasing on \((a,b)\). Therefore, for any \(v\in(a,b)\),
\[
d(v)\ >\ \lim_{t\downarrow a} d(t)\ =\ d(a)\ =\ 0,
\]
contradicting that \(d(v)<0\) on \((a,b)\). Hence, \(\mathcal N\) must be empty, i.e. \(d(v)\ge 0\) for all \(v\in\left(\ubar{v},\bar{v}\right)\).
By continuity, \(d(v)\ge 0\) on \(\left[\ubar{v},\bar{v}\right]\); equivalently, \(\hat\beta(v)\ge \beta(v)\) for all \(v\in\left[\ubar{v},\bar{v}\right]\).
\end{proof}

\subsection{\Cref{thm:spa_safety_equilibrium_cs} Proof}

\begin{proof}
Because all bidders use the same strictly increasing \(\beta\), a type-\(v\) bidder who reports \(t\) wins if and only if \(T\le t\). In a second-price auction, if she wins when \(T=z\), she pays the highest losing bid, which equals \(\beta(z)\). Therefore, conditional on \(V_i=v\), her expected utility from reporting \(t\) is
\[
\Psi_u(v,t;\beta)
\coloneqq
\mathbb{E} \left[
u \left(W(v,V_{-i})-\beta(T)\right)\mathbf{1}\{T\le t\}
+
u \left(s(v)\right)\mathbf{1}\{T>t\}
\ \middle|\ V_i=v
\right].
\]
If \(\beta\in E_\uparrow(u)\), then for each \(v\), the report \(t=v\) maximizes \(\Psi_u(v,t;\beta)\).

Fix \(v\in\left(\ubar{v},\bar{v}\right)\). By our interiority assumption, the maximizer is interior at \(t=v\), so
\[
\partial_t \Psi_u(v,t;\beta)\big\vert_{t=v}=0.
\]
Define
\[
m_u(v,z)\coloneqq \mathbb{E} \left[u \left(W(v,V_{-i})-\beta(z)\right)\,\middle|\,V_i=v,\ T=z\right].
\]
Conditioning on \(T\) and using \(\mathbf{1}\{T\le t\}\), we can write
\[
\Psi_u(v,t;\beta)
=
\int_{\ubar v}^{t} m_u(v,z) f_{T\mid v}(z) dz
+
u \left(s(v)\right)\int_{t}^{\bar v} f_{T\mid v}(z) dz.
\]

By assumption \(z\mapsto m_u\left(v,z\right)f_{T|v}\left(z\right)\) is continuous at \(z=v\). Therefore,
\[
\left.\partial_t\int_{\ubar{v}}^{t}m_u\left(v,z\right)f_{T|v}\left(z\right) dz\right|_{t=v}=m_u\left(v,v\right)f_{T|v}\left(v\right)
\]
and
\[
\left.\partial_t\int_t^{\bar v}f_{T|v}\left(z\right) dz\right|_{t=v}=-f_{T|v}\left(v\right).
\]
Hence,
\[
\left.\partial_t\Psi_u\left(v,t;\beta\right)\right|_{t=v}=f_{T|v}\left(v\right)\left(m_u\left(v,v\right)-u\left(s(v)\right)\right).
\]
Since \(f_{T|v}\left(v\right)>0\), the first-order condition at \(t=v\) implies the pivotal indifference condition
\[
E\left[u\left(W\left(v,V_{-i}\right)-\beta\left(v\right)\right)\mid V_i=v,T=v\right]=u\left(s(v)\right).
\]

The same argument applied to any \(\hat\beta\in E_\uparrow(\hat u)\) yields
\[
\mathbb{E} \left[\hat u \left(W(v,V_{-i})-\hat\beta(v)\right)\,\middle|\,V_i=v,\ T=v\right]
=
\hat u(s(v)).
\tag{A3}\label{eq:spa_pivotal_uhat}
\]

Fix \(v\in\left(\ubar{v},\bar{v}\right)\) and set \(b\coloneqq \beta(v)\). Conditional on the pivotal event \(\{V_i=v, T=v\}\), the bidder is indifferent under \(u\) between the lower bid that loses (yielding the sure payoff \(s(v)\)) and the higher bid that wins at price \(b\) (yielding the random payoff \(W(v,V_{-i})-b\)). As the lower bid is safer, the bidder must weakly prefer it under \(\hat u=\phi\circ u\), i.e.
\[
\mathbb{E} \left[\hat u \left(W(v,V_{-i})-\beta(v)\right)\,\middle|\,V_i=v,\ T=v\right]
\ \le\
\hat u(s(v)).
\tag{A4}\label{eq:spa_one_sided}
\]

Fix \(v\in\left(\ubar{v},\bar{v}\right)\) and define
\[
\rho_v(b)\coloneqq \mathbb{E} \left[\hat u \left(W(v,V_{-i})-b\right)\,\middle|\,V_i=v,\ T=v\right].
\]
\(\hat u\) is strictly increasing, so \(\rho_v\) is strictly decreasing in \(b\). \eqref{eq:spa_one_sided} implies \(\rho_v(\beta(v))\le \hat u(s(v))\), and \eqref{eq:spa_pivotal_uhat} implies \(\rho_v(\hat\beta(v))=\hat u(s(v))\). \(\rho_v\) is strictly decreasing, so \(\hat\beta(v)\le \beta(v)\). As \(v\) was arbitrary, this holds for all \(v\in\left(\ubar{v},\bar{v}\right)\), and by continuity of \(\beta,\hat\beta\) also at the endpoints.
\end{proof}

\bibliography{sample.bib}
\bibliographystyle{plainnat}

\end{document}